\newcommand{\R}{\mathbb{R}}
\newcommand{\mmS}{\mathcal{S}}
\newcommand{\mmY}{\mathcal{Y}}
\newcommand{\oS}{\overline{S}}
\newcommand{\oE}{\overline{E}}
\newcommand{\oF}{\overline{F}}
\newcommand{\self}{\scriptscriptstyle \circlearrowleft}
\DeclareMathOperator*{\diag}{diag}
\DeclareMathOperator*{\Sub}{Sub}\newcommand{\Subn}{\Sub\nolimits}
\DeclareMathOperator*{\Enz}{Enz}\newcommand{\Enzn}{\Enz\nolimits}
\DeclareMathOperator*{\Int}{Int}\newcommand{\Intn}{\Int\nolimits}
\DeclareMathOperator*{\Rct}{Rct}
\DeclareMathOperator*{\Con}{Con}
\newenvironment{enumerate*}[1][{}]{\begin{itemize}}{\end{itemize}}
\numberwithin{equation}{section}
\newtheorem{theorem}[equation]{Theorem}
\newtheorem{proposition}[equation]{Proposition}
\newtheorem{corollary}[equation]{Corollary}
\newtheorem{lemma}[equation]{Lemma}
\theoremstyle{definition}
\newtheorem{definition}[equation]{Definition}
\title{Variable elimination in post-translational modification reaction networks with mass-action kinetics}
\author{Elisenda Feliu, Carsten Wiuf}
\thanks{{\it Authors affiliation}: Bioinformatics Research Centre, Aarhus University, C. F. M\o llers All\'{e} 8, DK-8000 Aarhus, Denmark}
\thanks{{\it Corresponding author}: Elisenda Feliu, efeliu@birc.au.dk}
\date{\today}
\begin{document}

\maketitle

\begin{abstract}
We define a subclass of  Chemical Reaction Networks  called Post-Translational Modification systems. Important biological examples of such systems include MAPK cascades and two-component systems which are well-studied experimentally as well as theoretically. The steady states of such a system are solutions to a system of polynomial equations with as many variables as equations. Even for small systems the  task of finding the solutions is daunting. We  develop a mathematical framework based on the notion of a \emph{cut}, which provides a linear elimination procedure to reduce the number of variables in the system. 
The steady states are parameterized algebraically by a set of ``core'' variables, and the non-negative steady states correspond to non-negative values of the core variables.
 Further, minimal cuts are the  connected components in the species graph and provide conservation laws. A criterion for when a set of independent conservation laws can be derived from cuts is given.

\medskip
\noindent {\bf Keywords}:
Polynomial equations, Mass-action kinetics, MAPK cascade, Rational functions, Chemical Reaction Networks
\end{abstract}

\section{Introduction}

Signaling systems  play an important role in  regulation of cellular processes and are essential for cellular decision making. Typical signaling systems react to stimulus in the (cellular) environment and transmit a signal through connected layers of biochemical species. The layers provide means to adjust the response according to the stimulus. A common form of signaling systems is Post-translational Modification (PTM) systems where species are  activated  in  chemical reactions in order to propagate the signal through the system.

PTM systems have attracted considerable theoretical attention  due to their abundance in nature \cite{Huang-Ferrell} and regular form \cite{TG-rational}. The dynamics can be modeled as $\frac{dx(t)}{dt}=p(x)$, where $x=(x_1,\ldots,x_n)$ are the variables (concentrations of species) of the system and $p(x)$ is a vector of polynomials in $x$. Only certain types of  reactions are allowed, restricting the form of $p(x)$.  In particular, small specific systems have been scrutinized, focusing on  the dynamical behavior and the steady states of the systems. Examples include the biologically important  MAPK cascades \cite{Huang-Ferrell,Kholodenko-fourdim,Markevich-mapk}, as well as simpler signaling cascades  \cite{Feliu:2010p94,Heinrich-kinase,Ventura-Hidden}. 

We focus on the  steady states of a PTM system (defined formally in the next section) and how to determine them. 
 Taken with mass-action kinetics, the system's steady states are solutions to a set of polynomial equations in the species and  with coefficients given by unknown kinetic rates  (i.e.~unspecified parameters). In particular, the number of equations to be solved is equal to the number of species. Even small systems might have many variables such that analytical solutions are difficult to obtain and numerical solutions are prone to errors. Further, many PTM systems admit multistationarity  (the existence of more than one  steady state under particular biological conditions) which is  a mechanism for cellular decision making \cite{TG-Nature}. It is therefore of interest to determine the parameters for which mono- and multistationarity occur. Several non-necessary conditions for a unique positive steady state are known \cite{angelisontag2,craciun-feinbergI,feinberg-def0}, but when these fail, multistationarity is difficult to determine and often decided based on a random parameter search. Procedures to eliminate variables (hence, equations)  is therefore fundamental to the theoretical understanding of these systems as well as for numerical analysis.

 Our work is inspired  by previous work by Thomson and Gunawardena (TG) \cite{TG-rational} which we extend to embrace a range of  important PTM systems such as signaling cascades (including the MAPK cascade) and two-component systems with phosphorelays and phosphotranfer \cite{krell}, as well as systems with self-interactions.  We develop the idea of a \emph{cut} $\mmS_{\alpha}$, a subset of the substrates $\mmS$ with certain properties that allow us to express the steady state equations as rational functions in the ``core'' variables  $\mmS\setminus\mmS_{\alpha}$, providing an algebraic parameterization of the steady states in terms of the core variables. If the core variables take positive values at steady state, then we show that all other concentrations are either zero or positive as well. 
 
Further, we show that cuts relate to \emph{conservation laws} (conserved quantities that imply that the dynamics takes place in an affine invariant subspace of $\mathbb{R}^n$) that arise as connected components in the species graph \cite{angelisontag2}. Conservation laws are often used as a first step to reduce the dimensionality of the system. In our approach, conservation laws come into play after elimination of  variables from the steady state equations. In this way, we  allow for a larger reduction in the number of core variables. 

Our appoach makes use of algebraic tools as well as some basic graph properties; for example Tutte's Matrix-Tree theorem \cite{Tutte-matrixtree,TG-rational}.  One benefit is that parameters are treated as symbolic constants and do not need to be fixed or assumed known in advance. This is particularly relevant in biology, where we often are faced with systems that depend on experimental parameters (kinetic rates), which are difficult to determine.

\section{Post-translational modification systems}
\label{results}

\subsection{PTM system}
\label{PTMs}

A \emph{post-translation modification (PTM) system} consists of two non-empty sets of species, $\mathcal{S} = \{S_{1},\dots,S_{N}\}$ (the \emph{substrates}) and $ \mathcal{Y} = \{Y_{1},\dots,Y_{P}\}$ (the \emph{intermediate complexes}) with $\mathcal{S}\cap \mathcal{Y}=\emptyset$, and a set of reactions $\Rct=R_{a} \cup R_{b} \cup R_{c}\cup R_{d}$ with associated positive reaction rate constants:
\vspace{-0.2cm}
\begin{align*}
R_{a} &= \{ S_i+S_j \xrightarrow{a_{i,j}^k} Y_{k}| (i,j,k)\in I_{a} \}  & R_{c} &=  \{Y_{i} \xrightarrow{c_{i,j}} Y_{j} | (i,j)\in I_{c},i\neq j\}   \\
R_{b} &=  \{ Y_{k} \xrightarrow{b_{i,j}^k} S_{i}+S_{j} | (i,j,k)\in I_{b} \} &
R_{d} &=  \{S_{i} \xrightarrow{d_{i,j}} S_{j} | (i,j)\in I_{d},i\neq j\}
\end{align*}
for $I_{a}, I_{b} \subseteq  \{1,\dots,N\}^2\times \{1,\dots,P\}$, $I_c\subseteq \{1,\dots,P\}^2$ and $I_{d} \subseteq \{1,\dots,N\}^2$.
To fix the notation, we assume that any $(i,j,k)\in I_a\cup I_b$ satisfies $i\leq j$, so that self-interactions  a priori are allowed. If the rate constants are not required, we put an arrow to indicate a reaction and omit the rates.
Further:
\begin{enumerate}[(i)]
\item All chemical species are involved in at least one reaction.
\item For every intermediate complex $Y_k$ there exist $i\leq j$, indices $k_1,\dots,k_r$ and a chain of reactions
$Y_k \rightarrow Y_{k_1}\rightarrow \dots \rightarrow Y_{k_r} \rightarrow S_i+S_j. $
\end{enumerate}
Assumption (ii) ensures that $Y_k$ \emph{ultimately dissociates into two substrates}. Also, we allow that there are more than one $Y_k$ such that $S_i+S_j\rightarrow Y_k$  or $Y_k\rightarrow S_i+S_j$ for given $S_i, S_j$. 
For convenience, we put $c_{i,j}=0$, $d_{i,j}=0$ if $(i,j)\notin I_{c}$ or $I_{d}$ respectively, and similarly $a_{i,j}^k=0$ and $b_{i,j}^k=0$ if $(i,j,k)\notin I_a$ or $I_b$, respectively. For $i\leq j$ and $k$, we define $a_{j,i}^k=a_{i,j}^k$ and $b_{j,i}^k=b_{i,j}^k$.  For later use, we define
$$\mmS_{\self}=\{S_i\in \mmS|(i,i,k)\in I_a\cup I_b\textrm{ for some }k \}$$
to be the set of self-interacting substrates.

This setting fits post-translational modification of proteins catalyzed by enzymes as well as the transfer of modifier groups:
$$\xymatrix@C=15pt{  E + S \ar@<0.5ex>[r] & Y \ar[r] \ar@<0.5ex>[l] &  E+S^*} \qquad\xymatrix@C=15pt{  P^* + S \ar@<0.5ex>[r] & Y \ar@<0.5ex>[r] \ar@<0.5ex>[l] &  P+S^* \ar@<0.5ex>[l] }$$
where $S^*,P^*$ are modified proteins (substrates), $S, P$ their corresponding unmodified forms, $E$ an enzyme (substrate) and $Y$ an intermediate complex. That is, ${\mmS}=\{S,S^*,P,P^*,E\}$ and ${\mmY}=\{Y\}$. In the first case, the  attachment of the modifier group is catalyzed by the enzyme $E$, whereas in the second case, a modifier group is transferred from $P^*$ to $S$.
Modification of a substrate or an intermediate complex without the involvement of other species is modeled by  $S\rightarrow S^*$ and $Y\rightarrow Y^*$, respectively.

As an  example consider the PTM system with $\mmS=\{S_1,S_2,S_3,S_4,S_5\}$, $\mmY=\{Y_1,Y_2,Y_3\}$ and reactions
\begin{align}\label{mainex} 
\xymatrix@C=15pt{  S_1 \ar[r]^{\small d_{1,2}} & S_2}\quad\xymatrix@C=20pt{ S_2+ S_3 \ar@<0.5ex>[r]^(0.6){a_{2,3}^1} & Y_1 \ar@<0.5ex>[r]^{c_{1,2}} \ar@<0.5ex>[l]^(0.4){b_{2,3}^1} & Y_2 \ar[r]^(0.4){b_{1,4}^2} \ar@<0.5ex>[l]^(0.4){c_{2,1}} &  S_1+S_4  } \\
\xymatrix@C=20pt{  S_4+S_5 \ar@<0.5ex>[r]^(0.6){a_{4,5}^3} & Y_3 \ar[r]^(0.35){b_{3,5}^3} \ar@<0.5ex>[l]^(0.35){b_{4,5}^3}&  S_3+S_5} \nonumber
\end{align}
One interpretation is that $S_1$ is modified to $S_2$.  The modifier group is then transferred from $S_2$ to $S_3$ with the formation of two intermediate complexes $Y_1,Y_2$, causing the modification of $S_3$ to $S_4$ and the demodification of $S_2$ to $S_1$. Finally, $S_4$ is demodified via a Michaelis-Menten mechanism catalyzed by an enzyme $S_5$.

{\bf Nomenclature.} We introduce a few concepts that will be used in the following, some of which are taken from Chemical Reaction Network Theory (CRNT)  \cite{feinbergnotes,feinberg-invariant}. Consider the set of \emph{complexes} of the reaction system:
$$\mathcal{C}=\mmY \cup \{S_i,\ S_j |\ (i,j)\in I_{d} \}\cup  \{S_i+S_j |\ (i,j,k)\in I_{a}\cup I_b \textrm{ for some }k \}.$$ 
 Then:
\begin{enumerate}[$\bullet$]
\item $A\in \mathcal{C}$ \emph{reacts to}  $B\in\mathcal{C}$ if there exists a reaction $A\rightarrow B$. 
\item  $A\in \mathcal{C}$ \emph{ultimately reacts to}  $B\in\mathcal{C} $  if there exists a sequence of reactions $A \rightarrow A_{1} \rightarrow \dots \rightarrow A_{r} \rightarrow B$
with $A_{m}\in \mathcal{C}$. If  $A_{m}\in \widetilde{\mmY}\subseteq\mmY$ for all $m$, then  \emph{$A$ ultimately reacts to $B$ via $\widetilde{\mmY}$}.
\item $S_i$ and $S_j$  \emph{interact} if for some $Y_k$ either  $S_i+S_j$  reacts to $Y_k$ or vice versa. 
\item $S_i,S_j$ are \emph{1-linked}  if $d_{i,j}$ or $d_{j,i}\neq 0$. $Y_k,Y_v$ are 1-linked if  $c_{k,v}$ or $c_{v,k}\neq 0$. $S_i$ and $Y_k$ are 1-linked if for some $j$, $S_i+S_j$ reacts to $Y_k$ or vice versa ($j=i$ is allowed).
\end{enumerate}

Assumption (ii) of a PTM system ensures that all intermediate complexes ultimately react to some $S_i+S_j$ via $\mmY$.

\subsection{Mass-action kinetics} The set of reactions together with their associated rate constants give rise to a polynomial system of ordinary differential equations taken with \emph{mass-action kinetics}:
\begin{align*}
\dot{Y_{k}} &=  \sum_{j=1}^N \sum_{i=1}^j (a_{i,j}^k S_{i}S_j - b_{i,j}^kY_{k}) + \sum_{v=1}^P (c_{v,k}Y_{v}-c_{k,v}Y_{k}), &  k=1,\dots,P, \\
\dot{S_{i}}&=  \sum_{j=1}^N \sum_{k=1}^P \epsilon_{i,j}(-  a_{i,j}^k S_iS_j +b_{i,j}^k Y_{k})+  \sum_{j=1}^N (d_{j,i}S_j - d_{i,j}S_i), &i=1,\dots,N,
\end{align*}
where $\epsilon_{i,j}=1$ if $i\neq j$ and $2$ if $i=j$ and where $\dot{x}$ denotes $dx/dt$ for $x=x(t)$. Here we abuse notation and let $S_i,Y_k$ denote the concentrations of the species $S_i,Y_k$ as well. 
The steady states are the solutions to the polynomial system  obtained by setting the derivatives to zero, i.e. $\dot{Y_{k}}=0$ and $\dot{S_{i}}=0$:
\begin{align}
0 =&   \sum_{j=1}^N \sum_{i=1}^j (a_{i,j}^k S_{i}S_j - b_{i,j}^kY_{k}) + \sum_{v=1}^P (c_{v,k}Y_{v}-c_{k,v}Y_{k}), \label{dY1}  & k=1,\dots,P,\\
0 =& \sum_{j=1}^N \sum_{k=1}^P \epsilon_{i,j}(-  a_{i,j}^k S_iS_j +b_{i,j}^k Y_{k})+  \sum_{j=1}^N (d_{j,i}S_j - d_{i,j}S_i), & i=1,\dots,N.
  \label{dS1} 
\end{align}
This system is quadratic in the variables $Y_k, S_i$, but the only quadratic terms are of the form $S_iS_j$. It is linear in $Y_k$.

It is convenient to treat  the reaction rate constants as parameters with unspecified (positive) values and view $a_{i,j}^k, b_{l,m}^r, c_{u,v},d_{w,t}$ as symbols. For that, let 
$$\Con=\{a_{i,j}^k| (i,j,k)\in I_{a}\}\cup \{b_{i,j}^k|  (i,j,k)\in I_{b}\}\cup \{ c_{k,v}|  (k,v)\in I_{c}\}\cup \{d_{k,v}| (k,v)\in I_d\}$$
 be the set of the non-zero parameters (symbols). Then, the system \eqref{dY1}-\eqref{dS1} is quadratic in $\mmS\cup\mmY$ with coefficients in the field $\R(\Con)$. Further, if all $S_i$ are considered part of the coefficient field, then the system is linear with coefficients in $\R(\Con\cup\, \mathcal{S})$ and variables $Y_1,\dots,Y_P$.

Only  \emph{non-negative solutions} of the steady state equations are biologically meaningful. To study positivity of solutions, we introduce the concept of \emph{S-positivity}. Let $X=\{x_1,\dots,x_r\}$ be a finite set. A non-zero polynomial  in $\R[X]$ with  non-negative coefficients is called \emph{S-positive}.
Similarly, a rational function $f$ is S-positive if it is a quotient of two S-positive polynomials. If  $x_{1},\dots,x_{r}$ are substituted by positive real numbers in $f$, we obtain  a positive real number. In general, a rational function $f=p/q$ in  $z_1,\dots,z_s$ and  coefficients in  $\R(X)$ is S-positive if  the coefficients of $p$ and $q$ are S-positive rational functions in $x_1,\dots,x_r$. If $f$ is a rational function in $x_{1},\dots,x_{r}$ and $x_{i}=g(x_{1},\dots,\widehat{x_{i}},\dots,x_{r})$ with $g$ a rational function, then substituting $g$ into $f$ gives $f$ as a rational function in $x_{1},\dots,\widehat{x_{i}},\dots,x_{r}$.
 
The differential equations of  Example \eqref{mainex} are:

\vspace{-0.2cm}
\noindent
\begin{minipage}{.52\textwidth}
\begin{align}
\dot{Y_1} &= a_{2,3}^1 S_2S_3  -(b_{2,3}^1+c_{1,2})Y_1 + c_{2,1} Y_2 \label{maindiffeqs} \\
\dot{Y_2} &= c_{1,2}Y_1 - (b_{1,4}^2+ c_{2,1}) Y_2 \nonumber  \\
\dot{Y_3} &= a_{4,5}^3 S_4S_5 - (b_{4,5}^3  +b_{3,5}^3) Y_3 \nonumber  \\
\dot{S_5} &= -a_{4,5}^3 S_4S_5 + (b_{4,5}^3  +b_{3,5}^3) Y_3 \nonumber
\end{align}
\end{minipage}
\begin{minipage}{.45\textwidth}
\begin{align*}
\dot{S_1} &= -d_{1,2}S_1 + b_{1,4}^2Y_2   \\
\dot{S_2} &=d_{1,2}S_1 -a_{2,3}^1 S_2S_3 + b_{2,3}^1 Y_1  \\
\dot{S_3} &= -a_{2,3}^1 S_2S_3  + b_{2,3}^1 Y_1  +b_{3,5}^3 Y_3 \\ 
\dot{S_4} &=-a_{4,5}^3 S_4S_5+  b_{1,4}^2Y_2  + b_{4,5}^3Y_3. 
\end{align*}
\end{minipage}

\medskip\noindent
To compute the steady states, we can use $\dot{Y_3}=0$ to eliminate  $Y_3$  as a  function of the substrates. Also $Y_1,Y_2$ can  be eliminated by solving the linear system $\dot{Y_1}=\dot{Y_2}=0$. This is a general feature of PTM systems and is covered in Section \ref{eliminter}.

Further, observe that $\dot{S_5}+\dot{Y_3}=0$, which implies that the sum $S_5+Y_3$ is independent of time and thus conserved. In fact,  it implies that one of the equations $\dot{Y_3}=0$ and $\dot{S_5}=0$ is redundant.  Removing one of them leaves a polynomial system with $7$ equations in $8$ variables, and thus the solutions to the steady state equations form an algebraic variety of dimension at least one. This redundancy can be compensated for by fixing the value $S_5+Y_3=\oS$  and adding this relation to the steady state equations.

In the next section we discuss the existence of the so-called \emph{conservation laws} and provide a graphical procedure to determine (some of)  them. In most cases the procedure provides a set of independent conservation laws, but, as will be discussed below, this might not  always be the case.

\subsection{Conservation laws}\label{sec:conslaws}
We consider  systems where   inflow of species is not allowed and  species are not degraded or able to diffuse out. Such systems are ``entrapped'' in contrast to open systems (so-called ``continuous flow stirred tank reactors'')  \cite{craciun-feinberg-semiopen}. PTM systems are entrapped and have conservation laws  that reflect that the total amount of species remains constant either in free form $S_i$ or in bounded form $Y_j$. These laws follow from the system of differential equations and appear  as linear combinations of species (e.g. $S_5+Y_3=\oS$ in the example above).

The existence of conservation laws implies that the dynamics of the system takes place in a proper invariant subspace of $\R^{N+P}$. We identify $\R^{N+P}$ with the real vector space generated by $\mmS\cup \mmY$ so that $\R^{N+P} \equiv \langle S_1,\dots,S_N,Y_1,\dots,Y_P\rangle$. The species $S_i$ and $Y_k$ are unit vectors with a one in the $i$-th and $(N+k)$-th entry, respectively, and all other entries being zero.
A vector $v=(\lambda_1,\dots,\lambda_N,\mu_1,\dots,\mu_P)$ is identified with the   linear combination of species $\sum_i \lambda_i S_i + \sum_k \mu_k Y_k$.

Consider the \emph{stoichiometric subspace} of $\R^{N+P}$ \cite{craciun-feinberg-semiopen} of a PTM system:
$$\Gamma = \langle S_i+S_j-Y_k |\ (i,j,k)\in I_a\cup I_b   \rangle + \langle Y_k-Y_v|\  (k,v)\in I_c\rangle+ \langle S_i-S_j|\  (i,j)\in I_d\rangle.   $$
If $(\lambda_1,\dots,\lambda_N,\mu_1,\dots,\mu_P)\in \Gamma^{\perp}$, then 
 $\sum_i \lambda_i \dot{S}_i + \sum_k \mu_k \dot{Y}_k=0$. The converse might not be true \cite{feinberg-invariant}. It follows that any basis $\{\omega^1,\dots,\omega^d\}$ of $\Gamma^{\perp}$ provides a set of independent conserved quantities $\sum_{i=1}^N \lambda_i^l S_i + \sum_{k=1}^P \mu_k^l Y_k$  if $\omega^l=(\lambda_1^l,\dots,\lambda_N^l,\mu_1^l,\dots,\mu_P^l)$. Therefore, if \emph{total amounts} $\oS_1,\dots,\oS_d\in \R_{+}$ are provided, we require the steady state solutions to satisfy:
 \begin{equation} \label{totalamounts}
\oS_l = \sum_{i=1}^N \lambda_i^l S_i + \sum_{k=1}^P \mu_k^l Y_k  \qquad l=1,\dots,d.
\end{equation}
Total amounts  are fixed by the initial concentrations of the species.
 We say that
equations \eqref{totalamounts} are independent if the system has maximal rank, or equivalently, if the corresponding vectors of $\Gamma^{\perp}$ are independent.

We introduce the concepts of  a cut and a non-interacting graph and show that they provide means to obtain conservation laws.

\begin{definition}\label{def1}
 Let a non-empty set ${\mmS}_{\alpha}\subseteq{\mmS}$ be given and let the associated set ${\mmY}_{\alpha}\subseteq{\mmY}$ be the smallest set such that $Y_k\in{\mmY}_{\alpha}$ if $Y_k$  is 1-linked to some $S_i\in{\mmS}_{\alpha}$ or to $Y_m\in{\mmY}_{\alpha}$. 
\begin{itemize}
\item[(i)]  ${\mmS}_{\alpha}$ is \emph{closed} if  $S_j$ belongs to ${\mmS}_{\alpha}$ whenever $S_i\in {\mmS}_{\alpha}$ is 1-linked to $S_j$, and if $S_i$ and $S_j$ interact and are 1-linked to   $Y_k\in {\mmY}_{\alpha}$, then $S_i$ or $S_j$ are in $\mmS_{\alpha}$. 
\item[(ii)]  ${\mmS}_{\alpha}$ is a \emph{cut}  if (a)  $S_i,S_j\in{\mmS}_{\alpha}$ do  not interact for any $i, j$,  and (b) ${\mmS}_{\alpha}$ is closed.
\item[(iii)] A cut ${\mmS}_{\alpha}$ is \emph{minimal} if it has no proper closed subsets.
\end{itemize}
\end{definition}
Condition (ii) implies that a self-interacting substrate $S\in \mmS_{\self}$ cannot belong to any cut, that is, $\mmS_{\alpha}\cap \mmS_{\self}=\emptyset$ for any cut $\mmS_{\alpha}$. 
Note that a closed subset $\mmS'$ of a cut is also a cut. The union of two disjoint cuts $\mmS_{\alpha},\mmS_{\alpha}'$ is a cut if $\mmY_{\alpha}\cap \mmY_{\alpha}'=\emptyset$. 

In the PTM system with reactions 
$\xymatrix@C=12pt{ S_1+ S_4  \ar@<0.3ex>[r] & Y_2 \ar@<0.3ex>[l] \ar@<0.3ex>[r]  &   S_2 + S_4 \ar@<0.3ex>[l] }$ and  $\xymatrix@C=12pt{  Y_1 \ar@<0.3ex>[r]   &  S_2+S_3,  \ar@<0.3ex>[l]}$ the set
 $\{S_1,S_2\}$ is a cut, while   $\{S_1,S_3\}$ is not. There are no proper closed subsets of $\{S_1,S_2\}$ and thus the cut is minimal.

\begin{definition}\label{def2}
 Let a non-empty set ${\mmS}_{\alpha}\subseteq{\mmS}$ be given and let  ${\mmY}_{\alpha}\subseteq{\mmY}$  be as in Definition~\ref{def1}. Further, let $G_{\mmS_{\alpha},\mmY_{\alpha}}$ be the graph with node set ${\mmS}_{\alpha}\cup{\mmY}_{\alpha}$ and edges between 1-linked nodes. The graph is \emph{non-interacting} if  it is connected and   ${\mmS}_{\alpha}$ is a cut.
\end{definition}

If $\mmS_{\alpha}=\mmS$, then $\mmY_{\alpha}=\mmY$. All graphs $G_{\mmS_{\alpha},\mmY_{\alpha}}$ are naturally subgraphs of $G_{\mmS,\mmY}$.
Without proof we state the following:

\begin{lemma}\label{cut_graph}
 Let ${\mmS}_{\alpha}$ be a cut and $G'$ be a connected subgraph of $G_{\mmS_{\alpha},\mmY_{\alpha}}$ with node set $\mmS'\cup\mmY'$, $\mmS'\subseteq\mmS_{\alpha}$ and $\mmY' \subseteq\mmY_{\alpha}$. The following are equivalent:
\begin{itemize}
\item[(i)] $\mmS'$ is closed with associated set $\mmY'$.
\item[(ii)] $G'$ is a connected component of $G_{\mmS_{\alpha},\mmY_{\alpha}}$.
\item[(iii)] $G'$ is non-interacting and  contains only species in $\mmS_{\alpha}\cup \mmY_{\alpha}$.
\end{itemize}
If either is the case, then $\mmS'$ is a minimal cut and $G'=G_{\mmS',\mmY'}$. 
\end{lemma}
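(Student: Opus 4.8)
\noindent\emph{Proof plan.} The plan is to establish the cycle $(i)\Rightarrow(ii)\Rightarrow(iii)\Rightarrow(i)$ after isolating a couple of reusable facts, and then to read off the two final assertions. Throughout I take ``subgraph'' to mean induced subgraph, so that a subgraph is determined by its node set (this is the convention already implicit in the definition of $G_{\mmS_{\alpha},\mmY_{\alpha}}$, whose edges are \emph{all} 1-links inside its node set). I will use freely the remark recorded after Definition~\ref{def1} that a closed subset of a cut is again a cut; since $\mmS'\subseteq\mmS_{\alpha}$ and $\mmS_{\alpha}$ is a cut, this means ``$\mmS'$ is closed with associated set $\mmY'$'' is literally the same statement as ``$\mmS'$ is a cut with associated set $\mmY'$''. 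Two further preliminaries: (a) the associated-set operation of Definition~\ref{def1} is monotone, so for $\mmS''\subseteq\mmS_{\alpha}$ its associated set $\mmY''$ satisfies $\mmY''\subseteq\mmY_{\alpha}$; (b) by the minimality built into Definition~\ref{def1}, every $Y_k\in\mmY''$ is joined to some $S_i\in\mmS''$ by a path $S_i-Y_{k_1}-\dots-Y_{k_r}=Y_k$ all of whose nodes lie in $\mmS_{\alpha}\cup\mmY_{\alpha}$, hence a path in $G_{\mmS_{\alpha},\mmY_{\alpha}}$ (induct on the construction of $\mmY''$). Applying (b) to $\mmS''=\mmS_{\alpha}$ shows every connected component of $G_{\mmS_{\alpha},\mmY_{\alpha}}$ contains a node of $\mmS_{\alpha}$.

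\noindent\emph{Key Claim.} I would first prove: \emph{if $\mmS''\subseteq\mmS_{\alpha}$ is closed with associated set $\mmY''$, then no edge of $G_{\mmS_{\alpha},\mmY_{\alpha}}$ has exactly one endpoint in $\mmS''\cup\mmY''$}; equivalently, $\mmS''\cup\mmY''$ is a union of connected components. I would verify this by classifying a hypothetical boundary edge by the types of its endpoints. An edge from $S_i\in\mmS''$ to an $S_j\notin\mmS''$ contradicts closedness, since $\mmS''$ must contain every substrate 1-linked to one of its members. An edge from $\mmS''\cup\mmY''$ to a $Y$-node outside $\mmY''$ contradicts that $\mmY''$ is the \emph{associated} set, which already contains every $Y$ that is 1-linked to $\mmS''$ or to $\mmY''$. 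The only substantial case is an edge from $Y_k\in\mmY''$ to some $S_j\in\mmS_{\alpha}\setminus\mmS''$: unwinding the definition of 1-linked, there is a substrate $S_m$ with $S_j+S_m\to Y_k$ or $Y_k\to S_j+S_m$, so $S_j$ and $S_m$ interact and are both 1-linked to $Y_k\in\mmY''$; the second condition for closedness of $\mmS''$ then forces $S_j\in\mmS''$ (excluded) or $S_m\in\mmS''$, but the latter makes $S_m,S_j\in\mmS_{\alpha}$ two interacting substrates of the cut $\mmS_{\alpha}$, a contradiction. I expect this last case --- the only place where the hypothesis that $\mmS_{\alpha}$ is a cut is genuinely used --- to be the one delicate point.

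\noindent\emph{The equivalences.} For $(i)\Rightarrow(ii)$, apply the Key Claim with $\mmS''=\mmS'$: the node set $\mmS'\cup\mmY'$ is a union of components, and since $G'$ is connected on it (a standing hypothesis) it is a single component, so $G'$ is a connected component. For $(ii)\Rightarrow(iii)$, let $G'$ be a component; it meets $\mmS_{\alpha}$ by the preliminaries, so $\mmS'\neq\emptyset$. I first show $\mmY'$ equals the associated set $\widehat{\mmY}$ of $\mmS'$: ``$\subseteq$'' because any $Y_k\in\mmY'\subseteq\mmY_{\alpha}$ carries, by preliminary (b) applied to $\mmS_{\alpha}$, a $Y$-only chain from some $S_i\in\mmS_{\alpha}$ to $Y_k$, which lies in $G'$ (it is connected and contains $Y_k$), so $S_i\in\mmS'$ and the chain witnesses $Y_k\in\widehat{\mmY}$; ``$\supseteq$'' because any $Y_k\in\widehat{\mmY}$ is path-connected in $G_{\mmS_{\alpha},\mmY_{\alpha}}$ to a node of $\mmS'$ (preliminary (b) applied to $\mmS'$), hence lies in $G'$. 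Next, $\mmS'$ is closed: a 1-link from $S_i\in\mmS'$ to $S_j$ forces $S_j\in\mmS_{\alpha}$ (as $\mmS_{\alpha}$ is closed), so $S_j$ is adjacent to $S_i$ in $G_{\mmS_{\alpha},\mmY_{\alpha}}$ and lies in the component $G'$; and if interacting $S_i,S_j$ are both 1-linked to some $Y_k\in\mmY'=\widehat{\mmY}\subseteq\mmY_{\alpha}$, closedness of $\mmS_{\alpha}$ puts one of them, say $S_i$, in $\mmS_{\alpha}$, so $S_i$ is adjacent to $Y_k$ and lies in $G'$. Being a closed subset of the cut $\mmS_{\alpha}$, $\mmS'$ is itself a cut, $G'$ is connected and equals $G_{\mmS',\mmY'}$, so $G'$ is non-interacting in the sense of Definition~\ref{def2}, and ``$G'$ contains only species in $\mmS_{\alpha}\cup\mmY_{\alpha}$'' is automatic. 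Finally $(iii)\Rightarrow(i)$ is immediate: by Definition~\ref{def2} a non-interacting $G'$ has $\mmY'$ equal to the associated set of $\mmS'$ with $\mmS'$ a cut, and a cut is closed.

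\noindent\emph{Final assertions.} In each case $\mmS'$ is a cut with associated set $\mmY'$ and $G'=G_{\mmS',\mmY'}$ (all of this has appeared along the way). For minimality, suppose $\emptyset\neq\mmS''\subsetneq\mmS'$ is closed with associated set $\mmY''$. By monotonicity $\mmY''\subseteq\mmY'$, so $\mmS''\cup\mmY''$ is a nonempty subset of $\mmS'\cup\mmY'$; by the Key Claim it is a union of connected components of $G_{\mmS_{\alpha},\mmY_{\alpha}}$; but $\mmS'\cup\mmY'$ is a single connected component by $(ii)$, and a nonempty union of components contained in one component must equal that component. Hence $\mmS''\cup\mmY''=\mmS'\cup\mmY'$, so $\mmS''=\mmS'$, contradicting properness. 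Therefore $\mmS'$ has no proper closed subset, i.e.\ it is a minimal cut, completing the proof.
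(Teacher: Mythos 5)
The paper states Lemma~\ref{cut_graph} explicitly \emph{without proof}, so there is no authorial argument to compare against; judged on its own, your proof is correct and complete. The Key Claim (a closed subset of a cut, together with its associated set, has no boundary edges in $G_{\mmS_{\alpha},\mmY_{\alpha}}$) is the right organizing device: it disposes of $(i)\Rightarrow(ii)$ and of minimality in one stroke, and your case analysis of the four boundary-edge types is exhaustive. You also correctly identify the one genuinely delicate point, the edge $Y_k$--$S_j$ with $Y_k\in\mmY''$ and $S_j\in\mmS_{\alpha}\setminus\mmS''$, where the interaction partner $S_m$ forced into $\mmS''$ would violate condition (a) of Definition~\ref{def1}(ii) for $\mmS_{\alpha}$ (including the degenerate case $S_m=S_j$, which is excluded because self-interacting substrates cannot lie in a cut, as remarked after Definition~\ref{def1}). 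The only caveat worth flagging is your standing convention that subgraphs are induced; this is indeed needed for $(i)\Rightarrow(ii)$ and for $G'=G_{\mmS',\mmY'}$, and it is the reading under which the lemma is true, so stating it explicitly as you do is appropriate rather than a gap.
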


Thus, the non-interacting graphs containing substrates only in a cut $\mmS_{\alpha}$ are exactly the connected components of $G_{\mmS_{\alpha},\mmY_{\alpha}}$.
All  non-interacting graphs contain some node from $\mmS$ (condition (ii) of a PTM system).  However, such a graph might not exist. Consider for example the system with reactions
$ \xymatrix@C=12pt{S_1 \ar@<0.3ex>[r] & S_3, \ar@<0.3ex>[l]}$ $\xymatrix@C=12pt{S_2\ar@<0.3ex>[r] & S_3,   \ar@<0.3ex>[l]}$ $\xymatrix@C=12pt{S_1+S_2\ar@<0.3ex>[r] & Y_1.  \ar@<0.3ex>[l]}$
The graph $G_{\mmS,\mmY}$ is
\begin{center}
\begin{tikzpicture}
\node (Y1) at (1,0) {$Y_1$};
\node (S1) at (2,0.5) {$S_1$};
\node (S3)at (3,0) {$S_3$};
\node (S2) at (2,-0.5) {$S_2$};
\path[draw] (S1) to (Y1);
\path[draw] (S2) to (Y1);
\path[draw] (S1) to (S3);
\path[draw] (S2) to (S3);
\end{tikzpicture}
\end{center}
Condition (b) of Definition~\ref{def1}(ii) implies that any non-interacting graph must contain all four species, which contradicts  condition (a) of the same definition.

\begin{lemma}\label{lemma1}
Let $H_1,\dots,H_n$ be the non-interacting graphs of a PTM system, $C_l$ the node set of  $H_l$, $\mmS_l=\mmS\cap C_l$ and $\mmY_l= \mmY \cap C_l$. Then, $\dot{\omega_l}=0$ for 
$$\omega_l=\sum_{S\in \mmS_l } S + \sum_{Y\in  \mmY_l} Y\qquad l=1,\dots,n.  $$ 
That is, $H_l$ corresponds to a conservation law and $\omega_l$ is fixed by the initial amounts.
\end{lemma}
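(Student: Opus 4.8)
The plan is to identify $\omega_l$ with its coefficient vector $(\lambda_1,\dots,\lambda_N,\mu_1,\dots,\mu_P)\in\{0,1\}^{N+P}$, where $\lambda_i=1$ exactly when $S_i\in\mmS_l$ and $\mu_k=1$ exactly when $Y_k\in\mmY_l$, and to show that this vector lies in $\Gamma^{\perp}$. By the implication recalled in Section~\ref{sec:conslaws} --- a vector of $\Gamma^{\perp}$ yields $\sum_i\lambda_i\dot S_i+\sum_k\mu_k\dot Y_k=0$ --- this is precisely $\dot\omega_l=0$, so $\omega_l$ is conserved and fixed by the initial amounts; since its nonzero coefficients all equal $1$ and $\mmS_l\neq\emptyset$, it is a genuine nontrivial conservation law. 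Now $\Gamma$ is, by definition, spanned by the vectors $S_i+S_j-Y_k$ with $(i,j,k)\in I_a\cup I_b$, the vectors $Y_k-Y_v$ with $(k,v)\in I_c$, and the vectors $S_i-S_j$ with $(i,j)\in I_d$, so it suffices to check that $\omega_l$ is orthogonal to each of these three families. Throughout I use that, by Definition~\ref{def2}, $H_l=G_{\mmS_l,\mmY_l}$ for a cut $\mmS_l$ whose associated set (Definition~\ref{def1}) is $\mmY_l$; by Lemma~\ref{cut_graph} this cut is in fact minimal, though minimality will not be needed.

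Consider first a generator $S_i-S_j$ with $(i,j)\in I_d$. Then $S_i$ and $S_j$ are $1$-linked, so the first closedness requirement in Definition~\ref{def1}, together with the symmetry of $1$-linkedness between substrates, gives $S_i\in\mmS_l\Leftrightarrow S_j\in\mmS_l$; hence $\lambda_i-\lambda_j=0$. Next consider $Y_k-Y_v$ with $(k,v)\in I_c$. Then $Y_k$ and $Y_v$ are $1$-linked, so the closure property defining $\mmY_l$, applied in both directions, gives $Y_k\in\mmY_l\Leftrightarrow Y_v\in\mmY_l$; hence $\mu_k-\mu_v=0$.

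The remaining, and most delicate, family consists of the generators attached to $(i,j,k)\in I_a\cup I_b$, where $S_i$ and $S_j$ are both $1$-linked to $Y_k$ and interact via $Y_k$. Suppose first $i\neq j$, so the generator is $S_i+S_j-Y_k$. If $Y_k\in\mmY_l$, then the no-interaction requirement for substrates of a cut forbids both $S_i$ and $S_j$ from lying in $\mmS_l$, while the second closedness requirement in Definition~\ref{def1} forces at least one of them into $\mmS_l$; hence exactly one lies in $\mmS_l$ and $\lambda_i+\lambda_j-\mu_k=1-1=0$. If $Y_k\notin\mmY_l$, then neither $S_i$ nor $S_j$ lies in $\mmS_l$, for otherwise $Y_k$ would be $1$-linked to a substrate of $\mmS_l$ and hence belong to $\mmY_l$; thus $\lambda_i+\lambda_j-\mu_k=0$. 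Suppose now $i=j$, so $S_i\in\mmS_{\self}$ and the generator is $2S_i-Y_k$. Then $S_i\notin\mmS_l$ because $\mmS_l\cap\mmS_{\self}=\emptyset$ for any cut, and moreover $Y_k\notin\mmY_l$: were it in $\mmY_l$, then $S_i$ would self-interact while being $1$-linked to $Y_k\in\mmY_l$, and the second closedness requirement read with the two substrate indices equal would force $S_i\in\mmS_l$, contradicting what was just shown. Hence $2\lambda_i-\mu_k=0$.

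These three computations show $\omega_l\in\Gamma^{\perp}$, which completes the argument. The step I expect to require the most care is the self-interaction case $i=j$: because of the coefficient $2$ in $2S_i-Y_k$, orthogonality there demands both $S_i\notin\mmS_l$ \emph{and} $Y_k\notin\mmY_l$, and the latter must be squeezed out of the closedness axiom by allowing its two substrate indices to coincide --- the same reading that already yields $\mmS_{\alpha}\cap\mmS_{\self}=\emptyset$. Everything else is a routine unwinding of Definitions~\ref{def1}--\ref{def2}.
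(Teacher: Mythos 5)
Your proof is correct, but it takes a different route from the paper's. The paper proves $\dot{\omega}_l=0$ by direct computation: it establishes the same combinatorial dichotomies you use (every reaction $S_i+S_j\rightleftarrows Y_k$ touching $H_l$ has exactly one of $S_i,S_j$ in $\mmS_l$ and has $Y_k\in\mmY_l$; the $c$- and $d$-reactions stay inside or outside $H_l$), then writes out the restricted forms of $\dot{Y}_k$ and $\dot{S}_i$ and cancels terms pairwise in the sum $\dot{\omega}_l$. You instead package exactly those dichotomies as orthogonality of the coefficient vector of $\omega_l$ to each generating family of $\Gamma$, and then invoke the implication $v\in\Gamma^{\perp}\Rightarrow\sum_i\lambda_i\dot S_i+\sum_k\mu_k\dot Y_k=0$ stated (without proof) in Section~\ref{sec:conslaws}. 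The underlying content is identical, so the two arguments are close cousins; what your version buys is modularity and a cleaner treatment of the self-interaction case, which the paper disposes of rather tersely via ``$\mmS_l\cap\mmS_{\self}=\emptyset$'' --- your explicit derivation that $Y_k\notin\mmY_l$ whenever $2S_i\rightleftarrows Y_k$ (squeezed out of the closedness axiom read with coinciding substrate indices) is precisely the fact needed to justify the paper's restricted double sums, and it is good that you isolated it, since the coefficient $2$ in the generator $2S_i-Y_k$ makes $S_i\notin\mmS_l$ alone insufficient. The only mild dependency you incur that the paper's self-contained computation avoids is the appeal to the $\Gamma^{\perp}$ fact, but since the paper asserts it as known, this is legitimate.
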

\begin{proof}
Substrates in $C_l$ interact only with substrates in $\mmS\setminus \mmS_l$ and thus, by  definition of $\mmY_l$, if $a_{i,j}^k\neq 0$ or $b_{i,j}^k\neq 0$ for $i\neq j$ then: (a) if $S_i$ (resp. $S_j$) is in $\mmS_{l}$, then $S_j$ (resp. $S_i$) belongs to $\mmS\setminus\mmS_l$, and $Y_k\in \mmY_{l}$; (b) if $Y_k\in \mmY_{l}$, then either $S_i$ or $S_j$, but not both, belongs to $\mmS_l$.
If $c_{v,k}\neq 0$ or $c_{k,v}\neq 0$, then $Y_k,Y_v$ belong to the same non-interacting graph (if any); if $d_{i,j}\neq 0$ or $d_{j,i}\neq 0$, then $S_i,S_j$ belong to the same non-interacting graph (if any). Since $\mmS_l\cap \mmS_{\self}=\emptyset$  for $Y_k\in \mmY_l$ and $S_i\in \mmS_l$ we have:
\begin{align*}
\dot{Y_{k}} &=  \sum_{i| S_i\in \mmS_l} \sum_{j| S_j\in \mmS\setminus \mmS_l} (a_{i,j}^k S_{i}S_j - b_{i,j}^kY_{k}) +  \sum_{v|Y_v\in \mmY_l} (c_{v,k}Y_{v}-c_{k,v}Y_{k})  \\
\dot{S_{i}}&=   \sum_{k|Y_k\in \mmY_l} \sum_{j| S_j\in \mmS\setminus \mmS_l} (-  a_{i,j}^k S_iS_j +b_{i,j}^k Y_{k})+   \sum_{j| S_j\in \mmS_l} (d_{j,i}S_j - d_{i,j}S_i).
\end{align*}
 It follows that $\sum_{k|Y_k\in \mmY_l}\sum_{v|Y_v\in \mmY_l} (c_{v,k}Y_{v}-c_{k,v}Y_{k})=0$ and $ \sum_{i| S_i\in \mmS_l} \sum_{j| S_j\in \mmS_l} (d_{j,i}S_j - d_{i,j}S_i)=0$. Similarly, the remaining terms in $\dot{\omega}_l$ cancel. Thus, $\dot{\omega_l}=0$.
\end{proof}

Thus, each non-interacting graph gives rise to a conserved amount.
If each non-interacting graph contains a species that only belongs to that specific graph, then  the $\omega_l$'s are independent. In particular, conservation laws derived from the connected components of $G_{\mmS_{\alpha},\mmY_{\alpha}}$ for some cut $\mmS_{\alpha}$ are independent. In general,  the set of conservation laws found from Lemma~\ref{lemma1} can be reduced to a set of  independent conservation laws. 

In Example \eqref{mainex}, the graph $G_{\mmS,\mmY}$ is 
\begin{center}
\begin{tikzpicture}
\node (S5) at (0,0) {$S_5$};
\node (Y3) at (1,0) {$Y_3$};
\node (S3) at (2,0.5) {$S_3$};
\node (S4) at (2,-0.5) {$S_4$};
\node (Y1)at (3,0.5) {$Y_1$};
\node (Y2)at (3,-0.5) {$Y_2$};
\node (S1)at (4,-0.5) {$S_1$};
\node (S2)at (4,0.5) {$S_2$};

\path[draw] (S5) to (Y3);
\path[draw] (S3) to (Y3);
\path[draw] (S4) to (Y3);
\path[draw] (S3) to (Y1);
\path[draw] (S4) to (Y2);
\path[draw] (Y1) to (Y2);
\path[draw] (S2) to (Y1);
\path[draw] (S1) to (Y2);
\path[draw] (S2) to (S1);
\draw[red] (0.5,0) ellipse (0.9cm and 0.4cm);
\draw[blue] (2.15,0) ellipse (1.4cm and 1cm);
\draw[red] (3.5,0) circle (1cm);
\node[red] at (-0.2,-0.6) {$H_1$}; 
\node[blue] at (0.8,0.9) {$H_3$}; 
\node[red] at (4.5,0.9) {$H_2$}; 
\end{tikzpicture}
\end{center}
\noindent
The non-interacting graphs $H_1,H_2,H_3$ are colored. If total amounts $\oS_1,\oS_2,\oS_3$ are provided then the steady state solutions must satisfy:
$\oS_1=S_5+Y_3$, $\oS_2=S_1+S_2+Y_1+Y_2$, and $\oS_3=S_3+S_4+Y_1+Y_2+Y_3$.
These conserved total amounts are easily  verified by differentiation using \eqref{maindiffeqs}. 

Consider a two-layer cascade of modification cycles that share the demodification enzyme $F$ in each layer. The reaction system consists of $\mmS=\{E,F,S_1,S_2,S_3,S_4\}$, $\mmY=\{Y_1,Y_2,Y_3,Y_4\}$ and the reactions
%{\small
\begin{equation} \label{nopart}
\xymatrix@R=5pt@C=15pt{  E + S_1 \ar@<0.5ex>[r] & Y_1 \ar[r] \ar@<0.5ex>[l] &  E+S_2  &     F + S_2 \ar@<0.5ex>[r]  & Y_3 \ar[r] \ar@<0.5ex>[l] &  F+S_1  \\
  S_2+ S_3 \ar@<0.5ex>[r] & Y_2  \ar@<0.5ex>[l]  \ar[r] &  S_2+S_4 &  F + S_4 \ar@<0.5ex>[r] & Y_4 \ar[r] \ar@<0.5ex>[l] &  F+S_3}   \end{equation}
  %}
 The subsets $\mmS_{\alpha}=\{E,S_3,S_4\}, \{E,F\}, \{S_1,S_2\}$ are examples of maximal cuts (they cannot be extended to larger cuts by including more substrates).
The graph $G_{\mmS,\mmY}$ is
\begin{center}
\begin{tikzpicture}
\node (E) at (0,0) {$E$};
\node (Y1) at (1,0) {$Y_1$};
\node (S2) at (2,0) {$S_2$};
\node (Y2)at (3.5,0) {$Y_2$};
\node (S3)at (5.5,0) {$S_3$};
\node (S4) at (4.4,0.3) {$S_4$};
\node (S1) at (1,1) {$S_1$};
\node (Y3) at (2,1) {$Y_3$};
\node (Y4) at (4.7,1) {$Y_4$};
\node (F) at (3.5,1) {$F$};
\path[draw] (E) to (Y1);
\path[draw] (S2) to (Y1);
\path[draw] (S1) to (Y1);
\path[draw] (S2) to (Y2);
\path[draw] (S3) to (Y2);
\path[draw] (S4) to (Y2);
\path[draw] (S1) to (Y3);
\path[draw] (S4) to (Y4);
\path[draw] (F) to (Y3);
\path[draw] (S3) to (Y4);
\path[draw] (F) to (Y4);
\draw[red] (0.5,0) ellipse (0.9cm and 0.35cm);
\draw[rounded corners,blue] (2.8,-0.2) -- (4.65,1.4) -- (5.9,-0.25) -- cycle;
\draw[rounded corners,blue] (0.7,1.3) -- (2.1,1.3) -- (3.8,0.2) -- (3.8,-0.25) -- (0.7,-0.25) -- cycle;
\draw[red] (3.35,1) ellipse (1.8cm and 0.35cm);
\end{tikzpicture}
\end{center}
These graphs are obtained as connected components of the graph $G_{\mmS_{\alpha},\mmY_{\alpha}}$ for the cuts $\mmS_{\alpha}$ above. As in the previous example, the different non-interacting graphs yield independent conservation laws and thus  if total amounts are provided, we obtain the following equations: $\oS_1 = E+Y_1$,  $\oS_2 = S_3+S_4+Y_2+Y_4$, $\oS_3=F+Y_3+Y_4$ and $\oS_4=S_1+S_2+Y_1+Y_2+Y_3$.

This procedure provides an easy construction of conservation laws. In the two examples above, the conservation laws obtained from the graph are independent and, additionally, determine all  conservation laws arising from $\Gamma^{\perp}$ ($\dim \Gamma^{\perp}=3$ and $4$, respectively).
However, this is not always the case. 
Consider  for instance the reaction system
\begin{equation}\label{weird1}
\xymatrix@C=15pt{  S_1+ S_2 \ar@<0.5ex>[r]  & Y_1 \ar@<0.5ex>[r]  \ar@<0.5ex>[l] & Y_2 \ar@<0.5ex>[r] \ar@<0.5ex>[l] &  S_3+S_4  \ar@<0.5ex>[l] }\end{equation}
The graph $G_{\mmS,\mmY}$ is
\begin{center}
\begin{tikzpicture}
\node (Y1) at (1,0) {$Y_1$};
\node (S1) at (0,0.5) {$S_1$};
\node (Y2)at (2,0) {$Y_2$};
\node (S2)at (0,-0.5) {$S_2$};
\node (S3) at (3,-0.5) {$S_3$};
\node (S0) at (3,0.5) {$S_4$};
\path[draw] (S1) to (Y1);
\path[draw] (S2) to (Y1);
\path[draw] (S0) to (Y2);
\path[draw] (Y1) to (Y2);
\path[draw] (S3) to (Y2);
\end{tikzpicture}
\end{center}
There are $4$   non-interacting graphs that give the conserved total amounts $\oS_1 =S_1+S_3+Y_1+Y_2$, $\oS_2 =S_1+S_4+Y_1+Y_2$,  $\oS_3 =S_2+S_3+Y_1+Y_2$, and $\oS_4 =S_2+S_4+Y_1+Y_2$. The rank of the space generated by the corresponding $4$ vectors in $\R^{6}$ is $3$, implying that one of the relations is redundant.  
In this case the procedure still gives all conservation laws, because the dimension of $\Gamma^{\perp}$ is $3$.

Consider the following reaction system:
\begin{equation}\label{weird2}
\xymatrix@C=15pt{  S_1+ S_3 \ar@<0.5ex>[r] & Y_1 \ar@<0.5ex>[r] \ar@<0.5ex>[l]  &  S_2+S_4  \ar@<0.5ex>[l]  }\qquad \xymatrix@C=15pt{  S_1+ S_4 \ar@<0.5ex>[r] & Y_2 \ar@<0.5ex>[l]   }\qquad \xymatrix@C=15pt{  S_2+ S_3 \ar@<0.5ex>[r] & Y_3 \ar@<0.5ex>[l]   }
\end{equation}
The graph $G_{\mmS,\mmY}$ is
\begin{center}
\begin{tikzpicture}
\node (Y1) at (2,0) {$Y_1$};
\node (S1) at (1,0.5) {$S_1$};
\node (Y2)at (0,0) {$Y_2$};
\node (Y3)at (4,0) {$Y_3$};
\node (S2)at (3,0.5) {$S_2$};
\node (S3) at (3,-0.5) {$S_3$};
\node (S4) at (1,-0.5) {$S_4$};
\path[draw] (S1) to (Y1);
\path[draw] (S2) to (Y1);
\path[draw] (S1) to (Y2);
\path[draw] (Y1) to (S3);
\path[draw] (Y1) to (S4);
\path[draw] (S4) to (Y2);
\path[draw] (S3) to (Y3);
\path[draw] (S2) to (Y3);
\end{tikzpicture}
\end{center}
There are $2$   non-interacting graphs that give the conserved total amounts $\oS_1 =S_1+S_2+Y_1+Y_2+Y_3$, and $\oS_2 =S_3+S_4+Y_1+Y_2+Y_3$. However, $\dim \Gamma^{\perp}=3$ and the procedure fails to provide three independent conservation laws. A third conservation law is $\oS_3=S_1+S_4+Y_1+2Y_2$, and the coefficient $2$ of $Y_2$ cannot be obtained from non-interacting graphs.

 \subsection{Cuts of $\mmS$ and conservation laws}  We provide a criterion to guarantee that there are $\dim \Gamma^{\perp}$ independent conservation laws derived from non-interacting graphs. The criterion will be used in Section \ref{examples}. 

In the following we make use of Lemma~\ref{cut_graph} without further reference. Let $\mmS_{\alpha}$ be a cut with associated set $\mmY_{\alpha}$. Define $\mmS_{\alpha}^c=\mmS\setminus \mmS_{\alpha}$ and $\mmY_{\alpha}^c=\mmY\setminus \mmY_{\alpha}$, and let $N_{\alpha}$, $P_{\alpha}$ (resp. $N_{\alpha}^c$, $P_{\alpha}^c$) be the cardinality of  $\mmS_{\alpha}$, $\mmY_{\alpha}$ (resp. $\mmS_{\alpha}^c$, $\mmY_{\alpha}^c$). 
Extend the set of conservation laws derived from the connected components of $G_{\mmS_{\alpha},\mmY_{\alpha}}$ to a maximal set of $n$ independent conservation laws derived from {\it other} non-interacting graphs (thus containing species in $\mmS_{\alpha}^c\cup \mmY_{\alpha}^c$). Let $n_{\alpha}^c=n-n_{\alpha}$,  where $n_{\alpha}$ is the number of connected components of $G_{\mmS_{\alpha},\mmY_{\alpha}}$.

\begin{lemma} \label{conslaws} Let $\mmS_{\alpha}$ be a cut and keep the notation introduced above.  Then, we have that $\dim \left(  \langle \mmS_{\alpha}^c\cup \mmY_{\alpha}^c \rangle \cap \Gamma  \right) \leq N_{\alpha}^c+P_{\alpha}^c-n_{\alpha}^c$ and   $\dim \Gamma^{\perp}=n$
if and only if 
$$\dim \left(  \langle \mmS_{\alpha}^c\cup \mmY_{\alpha}^c \rangle \cap \Gamma \right)= N_{\alpha}^c+P_{\alpha}^c-n_{\alpha}^c. $$
\end{lemma}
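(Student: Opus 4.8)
The plan is to work inside $\R^{N+P}\equiv\langle\mmS\cup\mmY\rangle$, which splits as an orthogonal direct sum $V_{\alpha}\oplus V_{\alpha}^c$ with $V_{\alpha}=\langle\mmS_{\alpha}\cup\mmY_{\alpha}\rangle$ and $V_{\alpha}^c=\langle\mmS_{\alpha}^c\cup\mmY_{\alpha}^c\rangle$; write $\rho$ and $\pi$ for the orthogonal projections onto $V_{\alpha}$ and $V_{\alpha}^c$. The first step is to classify the generators of $\Gamma$ using that $\mmS_{\alpha}$ is a cut. A generator $S_i-S_j$ with $(i,j)\in I_d$ lies entirely in $V_{\alpha}$ or entirely in $V_{\alpha}^c$, since closedness forces $S_j\in\mmS_{\alpha}$ whenever $S_i\in\mmS_{\alpha}$. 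A generator $Y_k-Y_v$ with $(k,v)\in I_c$ lies entirely in $V_{\alpha}$ or entirely in $V_{\alpha}^c$, since $\mmY_{\alpha}$ is closed under being $1$-linked to one of its own intermediates. For a generator $S_i+S_j-Y_k$ with $(i,j,k)\in I_a\cup I_b$: if $Y_k\in\mmY_{\alpha}$ then exactly one of $S_i,S_j$ — say $S_i$ — lies in $\mmS_{\alpha}$ (at least one by closedness, because $S_i,S_j$ interact and are $1$-linked to $Y_k$; not both, because substrates in a cut do not interact; in particular $i\neq j$ since $\mmS_{\alpha}\cap\mmS_{\self}=\emptyset$), while if $Y_k\notin\mmY_{\alpha}$ then $S_i,S_j\notin\mmS_{\alpha}$ (otherwise $Y_k$ would be $1$-linked to a substrate of $\mmS_{\alpha}$). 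Hence every generator of $\Gamma$ is either contained in $V_{\alpha}^c$, or killed by $\rho$, or of the form $S_i+S_j-Y_k$ with $S_i\in\mmS_{\alpha}$, $S_j\in\mmS_{\alpha}^c$, $Y_k\in\mmY_{\alpha}$, for which $\rho(S_i+S_j-Y_k)=S_i-Y_k$.

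The second step identifies $\rho(\Gamma)$. Running $\rho$ over the generators produces exactly the vectors $u-w$ where $\{u,w\}$ ranges over the edges of $G_{\mmS_{\alpha},\mmY_{\alpha}}$ (edges $S_i-S_j$ with $(i,j)\in I_d$, edges $Y_k-Y_v$ with $(k,v)\in I_c$, and edges $S_i-Y_k$ coming from the mixed generators $S_i+S_j-Y_k$ with $Y_k\in\mmY_{\alpha}$), and conversely each such edge vector is obtained, since every edge of $G_{\mmS_{\alpha},\mmY_{\alpha}}$ arises from such a generator. Thus $\rho(\Gamma)$ is the edge space of $G_{\mmS_{\alpha},\mmY_{\alpha}}$; by its standard description $\dim\rho(\Gamma)=N_{\alpha}+P_{\alpha}-n_{\alpha}$, and its orthogonal complement inside $V_{\alpha}$ is spanned by the indicator vectors of the connected components of $G_{\mmS_{\alpha},\mmY_{\alpha}}$, which by Lemmas~\ref{cut_graph} and~\ref{lemma1} are the conservation laws $\omega_1,\dots,\omega_{n_{\alpha}}$. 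Applying rank--nullity to $\rho|_{\Gamma}$, whose kernel is $\Gamma\cap V_{\alpha}^c$, yields
\[
\dim\Gamma \;=\; \dim(\Gamma\cap V_{\alpha}^c)\;+\;N_{\alpha}+P_{\alpha}-n_{\alpha}.
\]

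The third step proves the bound and the equivalence. Orthogonality of $V_{\alpha}$ and $V_{\alpha}^c$ gives $\langle v,x\rangle=\langle v,\rho(x)\rangle$ for $v\in V_{\alpha}$ and $\langle v,x\rangle=\langle v,\pi(x)\rangle$ for $v\in V_{\alpha}^c$. The first identity shows that for $v\in V_{\alpha}$ one has $v\perp\Gamma\iff v\perp\rho(\Gamma)$, so the computation of $\rho(\Gamma)$ gives $\Gamma^{\perp}\cap V_{\alpha}=\langle\omega_1,\dots,\omega_{n_{\alpha}}\rangle$. If $\sum_{l>n_{\alpha}}c_l\,\pi(\omega_l)=0$ then $w:=\sum_{l>n_{\alpha}}c_l\,\omega_l\in\ker\pi=V_{\alpha}$ and, by Lemma~\ref{lemma1}, $w\in\Gamma^{\perp}$; hence $w\in\langle\omega_1,\dots,\omega_{n_{\alpha}}\rangle$, and independence of $\omega_1,\dots,\omega_n$ forces all $c_l=0$. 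So $\pi(\omega_{n_{\alpha}+1}),\dots,\pi(\omega_n)$ are $n_{\alpha}^c$ independent vectors of $V_{\alpha}^c$, and every $v\in\Gamma\cap V_{\alpha}^c$ satisfies $0=\langle v,\omega_l\rangle=\langle v,\pi(\omega_l)\rangle$ for $l>n_{\alpha}$; thus $\Gamma\cap V_{\alpha}^c$ lies in their orthogonal complement inside $V_{\alpha}^c$, giving $\dim(\Gamma\cap V_{\alpha}^c)\leq N_{\alpha}^c+P_{\alpha}^c-n_{\alpha}^c$. Finally $\dim\Gamma^{\perp}=N+P-\dim\Gamma$, so by the displayed identity $\dim\Gamma^{\perp}=n$ if and only if $\dim(\Gamma\cap V_{\alpha}^c)=N+P-n-(N_{\alpha}+P_{\alpha}-n_{\alpha})=N_{\alpha}^c+P_{\alpha}^c-n_{\alpha}^c$, which is the asserted equivalence.

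I expect the main obstacle to be the first step, namely the trichotomy for the generators of $\Gamma$ and in particular the fact that $\rho$ carries the mixed generators $S_i+S_j-Y_k$ to single edge vectors $S_i-Y_k$ of $G_{\mmS_{\alpha},\mmY_{\alpha}}$: this is where all the hypotheses enter (closedness, $\mmS_{\alpha}\cap\mmS_{\self}=\emptyset$, the minimality in the definition of $\mmY_{\alpha}$, and Lemma~\ref{cut_graph} to identify connected components of $G_{\mmS_{\alpha},\mmY_{\alpha}}$ with minimal cuts and hence with conservation laws). Once that structural step is in place, the rest is routine bookkeeping with rank--nullity and orthogonal complements.
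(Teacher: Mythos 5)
Your argument is correct and follows essentially the same route as the paper's proof: you project onto the coordinates of $\mmS_{\alpha}\cup\mmY_{\alpha}$, identify the image of $\Gamma$ with the edge space $\Gamma_{\alpha}$ of $G_{\mmS_{\alpha},\mmY_{\alpha}}$ (of dimension $N_{\alpha}+P_{\alpha}-n_{\alpha}$), and apply rank--nullity with kernel $\langle\mmS_{\alpha}^c\cup\mmY_{\alpha}^c\rangle\cap\Gamma$; your derivation of the upper bound via orthogonality to the independent vectors $\pi(\omega_l)$, $l>n_{\alpha}$, is just a rephrasing of the paper's use of $\dim\Gamma^{\perp}\geq n$. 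The main added value of your write-up is that you make explicit the classification of the generators of $\Gamma$ forced by the cut hypotheses, which is exactly the step the paper compresses into the unproved assertion that $\pi_{\alpha}\colon\Gamma\to\Gamma_{\alpha}$ is a well-defined surjection (note only that your trichotomy lists ``contained in $V_{\alpha}^c$'' and ``killed by $\rho$'' as separate cases while omitting the generators contained in $V_{\alpha}$, a slip you silently correct in the next step).
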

\begin{proof}  Without loss of generality we can assume that $\mmY_{\alpha}=\{Y_1,\dots,Y_{P_{\alpha}}\}$ and $\mmS_{\alpha}=\{S_1,\dots,S_{N_{\alpha}}\}$. Identify $\R^{N+P}$ with $\R^{N_{\alpha}}\times \R^{P_{\alpha}} \times \R^{N_{\alpha}^c}\times \R^{P_{\alpha}^c}$ and let 
\begin{align*}
\Gamma_{\alpha} & = \langle A-B |\ \textrm{for each edge } \xymatrix@C=10pt{A \ar@{-}[r] & B} \textrm{ in }G_{\mmS_{\alpha},\mmY_{\alpha}}\rangle.
\end{align*}
The space $\Gamma_{\alpha}^\perp$ is generated by the vectors which are sums of species in each connected component of  $G_{\mmS_{\alpha},\mmY_{\alpha}}$ and hence $\dim \Gamma_{\alpha}^{\perp}=n_{\alpha}$. 
We have  $\dim \Gamma^\perp\geq n= n_{\alpha}+n_{\alpha}^c$ and we want to determine when  equality holds. Equivalently, we want to see when $\dim \Gamma = N+P-n$. If this is not the case, then  $\dim \Gamma < N+P-n$. Note that $N=N_{\alpha}+N_{\alpha}^c$ and  $P=P_{\alpha}+P_{\alpha}^c$.

Note that $\dim \Gamma_{\alpha}= N_{\alpha}+P_{\alpha} - n_{\alpha}$. Let 
$\pi\colon \R^{N+P} \rightarrow \R^{N_{\alpha}+P_{\alpha}}$ denote the projection onto the first $N_{\alpha}+P_{\alpha}$ coordinates and $\pi_{\alpha}\colon \Gamma \rightarrow \Gamma_{\alpha}$ its restriction to $\Gamma$ ($\pi_{\alpha}$ a surjective map). Then, 
$\dim \Gamma = \dim \Gamma_{\alpha} + \dim \ker \pi_{\alpha}$ and so $ \dim \ker \pi_{\alpha} \leq N_{\alpha}^c+P_{\alpha}^c-n_{\alpha}^c$. Further, $\dim \Gamma^\perp= n$  if and only if $ \dim \ker \pi_{\alpha} = N_{\alpha}^c+P_{\alpha}^c-n_{\alpha}^c$.
Finally, note that  $ \langle \mmS_{\alpha}^c\cup \mmY_{\alpha}^c \rangle \cap \Gamma = \ker\pi_{\alpha}$. Indeed, let 
$i\colon \Gamma  \hookrightarrow  \R^{N+P}$ and $i_{\alpha}\colon \Gamma_{\alpha}  \hookrightarrow  \R^{N_{\alpha}+P_{\alpha}}$ denote the natural inclusions. We have that $i_{\alpha} \circ \pi_{\alpha}= \pi \circ i$. The kernel of $\pi$ is clearly $\R^{N_{\alpha}^c+P_{\alpha}^c}=\langle \mmS_{\alpha}^c\cup \mmY_{\alpha}^c \rangle$ from where it follows that  the kernel of $\pi_{\alpha}$ is $\langle \mmS_{\alpha}^c\cup \mmY_{\alpha}^c \rangle\cap \Gamma$.

Therefore,  $\dim \left(  \langle \mmS_{\alpha}^c\cup \mmY_{\alpha}^c \rangle \cap \Gamma \right)=\dim \ker \pi_{\alpha} = N_{\alpha}^c+P_{\alpha}^c-n_{\alpha}^c$ if and only if $\dim \Gamma^{\perp}=n$ and the lemma is proved.
\end{proof}

As each non-interacting graph corresponds to a minimal cut, the lemma above provides a condition for when all conservation laws are recovered from cuts.

\emph{Remark.}
An easy way to construct elements of $\langle \mmS_{\alpha}^c\cup \mmY_{\alpha}^c  \rangle \cap \Gamma$ is by considering:
\begin{enumerate}[(i)]
\item Vectors $S_i-S_j$ for any pair $S_i,S_j \in \mmS_{\alpha}^c$ for which there exists a chain of reactions $\xymatrix@C=10pt{S_m+ S_i \ar@{-}[r] & A_{1} \ar@{-}[r] & \dots \ar@{-}[r] & A_{r} \ar@{-}[r] & S_m+S_j  }$
for some $S_m$, $A_{u}\in \mathcal{C}$ and  $-$ is  $\leftarrow$ or $\rightarrow$.  
\item Vectors $S_i+S_j-Y_k$, $S_i-S_j$ or $Y_k-Y_v$ corresponding to reactions with $S_i,S_j\in \mmS_{\alpha}^c$ and $Y_k,Y_v\in \mmY_{\alpha}^c$. 
\end{enumerate}
If we can construct $N_{\alpha}^c+P_{\alpha}^c-n_{\alpha}^c$ independent elements of $\langle \mmS_{\alpha}^c\cup \mmY_{\alpha}^c  \rangle \cap \Gamma$  of the previous type, then  the previous lemma holds. 
 
In Example \eqref{mainex} consider the cut $\mmS_{\alpha}=\{S_1,S_2,S_5\}$ with $\mmS_{\alpha}^c=\{S_3,S_4\}$ and the given conservation laws ($n=3$). We have $N_{\alpha}^c=2$ and $n_{\alpha}^c=1$. Further, $\mmY_{\alpha}=\mmY$ so that $P_{\alpha}^c=0$. The element $ S_3-S_4= (S_3+S_5 - Y_3)- (S_4+S_5-Y_3)$
belongs to $\langle \mmS_{\alpha}^c\rangle \cap \Gamma$. In addition, $N_{\alpha}^c+P_{\alpha}^c-n_{\alpha}^c=1$ and thus 
$\dim (\langle \mmS_{\alpha}^c\rangle \cap \Gamma)=N_{\alpha}^c+P_{\alpha}^c-n_{\alpha}^c$, implying  that all conservation laws are found from non-interacting graphs.

In Example \eqref{nopart}, consider the cut $\mmS_{\alpha}= \{E,S_3,S_4\}$ with $\mmS_{\alpha}^c =\{F,S_1,S_2\}$ and $N_{\alpha}^c=3$. In this case, $\mmY_{\alpha}=\{Y_1,Y_2,Y_4\}$, $\mmY_{\alpha}^c=\{Y_3\}$ and so $P_{\alpha}^c=1$. 
Two of the four conservation laws involve elements   in $\mmS_{\alpha}\cup \mmY_{\alpha}$ only and hence $n_{\alpha}=n_{\alpha}^c=2$. Further,  $N_{\alpha}^c+P_{\alpha}^c-n_{\alpha}^c=2$. The two independent vectors  $F+S_2-Y_3$ and $F+S_1-Y_3$ belong to $\langle F,S_1,S_2,Y_3 \rangle \cap \Gamma$. Thus, the graphical procedure provides all conservation laws.

In Example \eqref{weird1}, consider the cut $\mmS_{\alpha}=\{S_1,S_3\}$ with $\mmS_{\alpha}^c=\{S_2,S_4\}$ so that $N_{\alpha}^c=2$, $P_{\alpha}^c=0$. There is only one conservation law in $\mmS_{\alpha}\cup \mmY_{\alpha}$, $S_3+S_1+Y_1+Y_2$, and since $n=3$, then $n_{\alpha}^c=2$.  It follows that $N_{\alpha}^c+P_{\alpha}^c-n_{\alpha}^c=0$, and we are guaranteed that the dimension of $\langle S_2,S_4\rangle \cap \Gamma$ is zero.

In Example \eqref{weird2}, consider the cut $\mmS_{\alpha}=\{S_1,S_2\}$ with $\mmS_{\alpha}^c=\{S_3,S_4\}$ and  $N_{\alpha}^c=2$, $P_{\alpha}^c=0$. We have $n_{\alpha}^c=1$ and $N_{\alpha}^c+P_{\alpha}^c-n_{\alpha}^c=1$. However, $\langle S_3,S_4 \rangle \cap \Gamma$ has dimension zero and thus not all conservation laws arise from non-interacting graphs.

\section{Variable elimination}
In this section we show that the intermediate complexes  can always be eliminated and expressed as polynomials in the substrates with coefficients in $\R(\Con)$ (Section \ref{eliminter}). After choosing a cut $\mmS_{\alpha}$, the substrates in $\mmS_{\alpha}$ can be expressed in terms of those in $\mmS_{\alpha}^c=\mmS\setminus\mmS_{\alpha}$ (Section \ref{partition}).

\subsection{Elimination of intermediate complexes}\label{eliminter}
Consider the system $\dot{Y}_i=0$ in \eqref{dY1} as a linear system of $P$ polynomial equations with coefficients in $\R[\Con\cup\, \mathcal{S}]$ and $P$ variables $Y_1,\dots,Y_P$. 
If  the system has maximal rank, then there is a unique solution in $\R(\Con\cup\, \mathcal{S})$.

Specifically, we have a linear system  $AY=z$ where $Y=(Y_1,\dots,Y_P)^{t}$ and 
 $A=\{\lambda_{k,v}\}$ is a $P\times P$ matrix with coefficients in $\R[\Con]$,
 $$\lambda_{k,v} =\begin{cases}   -c_{v,k} &   \textrm{if } k\neq v\\
 \sum_{j=1}^N \sum_{i=1}^j b_{i,j}^k  + \sum_{u=1}^P c_{k,u} &  
\textrm{if } k=v.
\end{cases}$$  The independent term  $z=(z_1,\dots,z_P)^t$ is in $\R[\Con\cup\, \mathcal{S}]$:
$z_k =  \sum_{i\leq j} a_{i,j}^k S_{i}S_j.  $

Assume that $A$ has maximal rank $P$ in $\R(\Con)$. Then, using Cramer's rule to solve linear systems of equations, 
we obtain that $Y_k = \rho_k/\rho$ with $\rho=\det(A)\neq 0$ and $\rho_k$ the determinant of $A$ with the $k$-th column substituted by $z$.  Since the determinant is a homogeneous polynomial in the entries of the matrix, it follows that $\rho \in \R[\Con]$ and $\rho_k\in \R[\Con\cup\, \mathcal{S}]$. Therefore,
$$ Y_k= \sum_{i\leq j} \mu_{i,j}^k S_i S_j $$ 
with $\mu_{i,j}^k\in \R(\Con)$ and thus $Y_k$ is a polynomial in $\R(\Con)[\mmS]$.  If both $\rho,\rho^k$ are $S$-positive elements of  $\R[\Con]$ and $\R[\Con\cup\, \mathcal{S}]$, respectively, then for  positive rate constants and non-negative values of $S_i$, the steady state value of $Y_k$ is non-negative as well.
S-positivity of $\rho,\rho^k$ is proven in the next section using the \emph{Matrix-Tree theorem} \cite{Tutte-matrixtree}. Some basic concepts from graph theory are required.

\medskip
{\bf Graphs and the Matrix-Tree theorem.} Given a directed graph $G$, a \emph{spanning tree} $\tau$ is a directed subgraph with the same node set as $G$ and such that the corresponding undirected graph is connected and acyclic. 
There is a unique undirected path between any two nodes in a spanning tree \cite{Diestel}. A spanning tree $\tau$ is said to be \emph{rooted} at a node $v$ if  the unique path between any node $w$ and $v$ is directed from $w$ to $v$. It follows that $v$ is the only node with no out-edges, that is, there is no edge of the form $v\rightarrow w$ in $\tau$. In addition, there cannot be a node with two out-edges in $\tau$.  The graph $G$ is  \emph{strongly connected} if for any pair of nodes $v,w$ there is a directed path from $v$ to $w$. Any directed path from $v$ to $w$ in a strongly connected graph can be extended to a spanning tree rooted at $w$. Some general references for graph theory are \cite{Diestel} and \cite{Gross-Yellen}.

If $G$  is  labeled, then $\tau$ inherits a labeling from $G$ and we define
$$\pi(\tau)=\prod_{x\xrightarrow{a}y \in \tau} a.$$
Assume that $G$ has no self-loops. Order the node set $\{v_1,\dots,v_n\}$ of $G$  and denote by $a_{i,j}$ the label of the edge $v_i\rightarrow v_j$. We set $a_{i,j}=0$ if there is no edge from $v_i$ to $v_j$ (thus $a_{i,i}=0$). Let $\mathcal{L}(G)=\{\alpha_{i,j}\}$ be the \emph{Laplacian} of $G$, that is the matrix with
$$\alpha_{i,j} = \begin{cases} 
a_{j,i} & \textrm{if } i\neq j \\
-\sum_{k=1}^n a_{i,k} & \textrm{if }i=j,
\end{cases} $$
such that the column sums are zero. For each node $v_{j}$, let $\Theta(v_j)$ be the set of spanning trees of $G$ rooted at $v_{j}$. Then, the Matrix-Tree theorem states that the maximal minor $\mathcal{L}(G)_{(ij)}$  (the determinant of the minor obtained by removing the $i$-th row and  the $j$-th column of $\mathcal{L}(G)$) is:
$$ \mathcal{L}(G)_{(ij)} = (-1)^{n-1+i+j}  \sum_{\tau \in \Theta(v_j)}  \pi(\tau).$$ 
Note that for notational simplicity we have defined the Laplacian as the transpose of how it is   usually defined and the Matrix-Tree theorem has been adapted consequently.

In our case, the matrix   $A$ is not a Laplacian, since the column sums $\sum_{j=1}^N \sum_{i=1}^j b_{i,j}^k$ are not zero. However,   $A$ can be extended such that its determinant is a maximal minor of a Laplacian.

\subsection{Decomposition of the system}
Let $G_{\mathcal{Y}}$ be the directed graph  with node set $\mathcal{Y}$ and a directed edge   $Y_k\rightarrow Y_v$ if $(k,v)\in I_c$. The node sets of the connected components of $G_{\mathcal{Y}}$ determine a partition of
 $\mathcal{Y}$:   $\mathcal{Y}=\mathcal{Y}_1\cup \dots \cup \mathcal{Y}_s$.  
 Let $P_l$ be the cardinality of $\mathcal{Y}_l$ and  rename the intermediate complexes  such that  $\mathcal{Y}_{l}=\{Y_{P_1+\dots+P_{l-1}+1},\dots,Y_{P_1+\dots+P_{l}}\}$. 

If $Y_k\in \mathcal{Y}_l$ for some $l$, then $c_{k,v}=c_{v,k}=0$ for any $v$ such that $Y_v\notin \mathcal{Y}_l$. It follows, that $A$ is a block diagonal matrix $\diag(A_1,\ldots,A_s)$ with $A_l$ being a  $P_l\times P_l$ matrix. Solving $AY=z$ is thus equivalent to solving $s$ ``smaller'' systems with matrices $A_l$. Further,  $A$ has maximal rank $P$ if and only if $A_l$ has maximal rank $P_l$ for all $l$. 

Consider the connected component  $G_{\mathcal{Y}_l}$ corresponding to $\mathcal{Y}_l$. 
We construct an extended labeled directed graph $\widehat{G}_{\mathcal{Y}_l}$ with node set  $\mathcal{Y}_l\cup \{*\}$.  For convenience we order the nodes such that  $Y_{P_1+\dots+P_{l-1}+k}$ is  the $k$-th node and
 $*$ the $(P_l+1)$-th  node. Let $b^k=\sum_{i\leq j}  b_{i,j}^k$ and 
$a^k= \sum_{i\leq j} a_{i,j}^k$. The graph $\widehat{G}_{\mathcal{Y}_l}$ has the following labeled directed edges:
$Y_k\xrightarrow{c_{k,v}} Y_v$ if $(k,v)\in I_c$,  $Y_k\xrightarrow{b^k}  *$ if $b^k\neq 0$, and 
  $*\xrightarrow{a^k} Y_k$ if $a^k\neq 0$.

In Example \eqref{mainex},  the graph $G_{\mmY}$ has two connected components $\mmY_1=\xymatrix{Y_1\ar@<0.3ex>[r] & Y_2\ar@<0.3ex>[l]}$ and $\mmY_2=Y_3$. The graphs $\widehat{G}_{\mmY_1}$ and $\widehat{G}_{\mmY_2}$ are

%\vspace{-0.7cm}
$$\xymatrix@R=6pt{ Y_1 \ar@<0.3ex>[ddr]^(0.6){b_{2,3}^1} \ar@<0.3ex>[rr]^{c_{1,2}} && Y_2\ar@<0.3ex>[ll]^(0.4){c_{2,1}} \ar[ddl]^{b_{1,4}^2} \\ &   & \qquad &   Y_3 \ar@<0.3ex>[rr]^{b_{3,5}^3 + b_{4,5}^3} &&  \ast \ar@<0.3ex>[ll]^{a_{4,5}^3S_4S_5}  
\\ & \ast \ar@<0.3ex>[uul]^{a_{2,3}^1S_2S_3}  }$$ 

Let $\mathcal{L}=\{\alpha_{k,v}\}$ be the Laplacian of $\widehat{G}_{\mathcal{Y}_l}$.  If $k,v\leq P_l$, then $\alpha_{k,v}=-\lambda_{k,v}$. The entries of the last row  are $\alpha_{P_l+1,k} = b^k$ for $k\leq P_l$ and the entries of the last  column are $\alpha_{k,P_l+1}= a^k$ ($=z_k$) for $k\leq P_l$. 
We conclude that the $(P_l+1,P_l+1)$ principal minor of $\mathcal{L}$ is exactly $-A_l$ and thus,  by the Matrix-Tree theorem, we have
$$\det(A_l) = (-1)^{P_l} \mathcal{L}_{(P_l+1,P_l+1)} =   \sum_{\tau \in \Theta(*)}  \pi(\tau).  $$ 
Assumption (ii) of a PTM system ensures that each $Y_k$ ultimately reacts to some $S_i+S_j $ via $\mmY$, and hence there exists at least one spanning tree rooted at $*$. Thus, $\det(A_l)\neq 0$ and  $\det(A_l)$ is an S-positive element of $\R[\Con]$.

By the definition of $\rho_k$ and  the Matrix-Tree theorem,
$$\rho_k=   (-1)^{k+1}  \mathcal{L}_{(P_l+1,k)} =   \sum_{\tau \in \Theta(Y_{P_{l-1}+k})}  \pi(\tau),  $$ 
and hence $\rho_k$ is either zero or  an S-positive element of $\R[\Con \cup\, \mathcal{S}]$. 
 
 If there exists at least one spanning tree rooted at $v_k=Y_{P_{l-1}+k}$, then $\rho_k\neq 0$. 
 A necessary condition for this to happen is the existence of at least one in-edge to $v_k$. Otherwise the concentration at steady state of $v_k$ is zero, which is  expected if $v_k$ is only consumed and never produced. Similarly, if there is no reaction of the form $S_i+S_j\rightarrow Y_{P_{l-1}+m}$ for any $m$ (that is, a directed edge $*\rightarrow v_m$), then  $\rho_k=0$ for all $k$. 

The term $\rho_k$ is a homogeneous polynomial of degree 2 in $\mathcal{S}$ with coefficients in $\R[\Con]$, because any  spanning tree rooted at a node $v_k$ has exactly one edge of the form $*\rightarrow v_m$ for some $m$. Further, a monomial $S_iS_j$ appears in $\rho_k$ only if $S_i+S_j$ ultimately reacts to $v_k$ via $\mmY_l$.
If $\widehat{G}_{\mathcal{Y}_l}$ is strongly connected, then this condition is both sufficient and necessary. Indeed, if $S_i+S_j$ ultimately reacts to $v_k$ via $\mmY_l$, then there is a spanning tree rooted at $v_k$ containing this path.

The next proposition summarizes the discussion above:

\begin{proposition}\label{Yelim}
Consider a PTM system with  intermediate complexes $\mathcal{Y}$ and  substrates $\mathcal{S}$.  Then, $\dot{Y}_k=0$ for all $k$, if and only if
\begin{equation}\label{yelim} Y_k= \sum_{i\leq j} \mu_{i,j}^k S_i S_j \end{equation}
with $\mu_{i,j}^k\in \R(\Con)$ being  either zero or S-positive. Further:
\begin{enumerate}[(i)]
\item If  $S_i+S_j$ does not ultimately react to $Y_k$ via $\mmY$, then $\mu_{i,j}^k=0$. 
\item  If $\widehat{G}_{\mathcal{Y}_l}$ is strongly connected and $Y_k\in \mmY_l$, then $\mu_{i,j}^k\neq 0$   if and only if $S_i+S_j$ ultimately reacts to $Y_k$ via $\mmY_l$.
\item $\widehat{G}_{\mathcal{Y}_l}$ is strongly connected if and only if in \eqref{yelim},  $Y_k$ is a non-zero polynomial in $\R(\Con)[\mmS]$ for all $Y_k\in{\mmY}_l$.
\end{enumerate}
\end{proposition}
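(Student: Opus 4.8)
The plan is to derive everything from the Matrix-Tree identities already established in the discussion preceding the proposition, so the work is mostly in organizing the equivalences. First I would prove the main equivalence ``$\dot Y_k=0$ for all $k$ iff \eqref{yelim} holds''. The forward direction is exactly the Cramer's-rule computation: since $A=\diag(A_1,\dots,A_s)$ and each $\det(A_l)=\sum_{\tau\in\Theta(*)}\pi(\tau)\neq 0$ by Assumption (ii) (there is at least one spanning tree of $\widehat G_{\mathcal Y_l}$ rooted at $*$), the linear system $AY=z$ has the unique solution $Y_k=\rho_k/\rho$, and the identifications $\rho=\det(A)\in\R[\Con]$, $\rho_k\in\R[\Con\cup\mathcal S]$ give $Y_k=\sum_{i\le j}\mu_{i,j}^kS_iS_j$ with $\mu_{i,j}^k\in\R(\Con)$. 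S-positivity (or vanishing) of each $\mu_{i,j}^k$ follows because $\rho_k=\sum_{\tau\in\Theta(Y_{P_{l-1}+k})}\pi(\tau)$ is a sum of products of labels, hence S-positive in $\R[\Con\cup\mathcal S]$ or identically zero, and likewise $\rho=\det(A_l)$ over the relevant block is S-positive; a quotient of S-positive polynomials is S-positive, and one checks the coefficient of each monomial $S_iS_j$ separately. The reverse direction is immediate: if the $Y_k$ are given by \eqref{yelim}, then by construction these are the components of $A^{-1}z$, so $AY=z$, i.e.\ $\dot Y_k=0$ for all $k$.

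Next I would establish (i) and (ii), which concern which coefficients $\mu_{i,j}^k$ vanish. Both rest on the observation (already noted in the text) that $\rho_k=\sum_{\tau\in\Theta(Y_{P_{l-1}+k})}\pi(\tau)$ and that every spanning tree $\tau$ rooted at a node $v_k$ of $\widehat G_{\mathcal Y_l}$ contains exactly one edge of the form $*\to v_m$; the label of that edge is $a^m=\sum_{i\le j}a_{i,j}^mS_iS_j$, and the unique undirected $*$-to-$v_k$ path in $\tau$ traces a directed path $*\to v_m\to\cdots\to v_k$ through nodes of $\mathcal Y_l$. Hence a monomial $S_iS_j$ appears in $\rho_k$ only if $S_i+S_j\to Y_m$ for some $m$ and $Y_m$ ultimately reacts to $Y_k$ via $\mathcal Y_l$ — i.e.\ $S_i+S_j$ ultimately reacts to $Y_k$ via $\mathcal Y$. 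Since $\mu_{i,j}^k$ is (the coefficient of $S_iS_j$ in $\rho_k$) divided by $\rho\neq 0$, this gives (i). For (ii), when $\widehat G_{\mathcal Y_l}$ is strongly connected, conversely any directed path $*\to v_m\to\cdots\to v_k$ realizing ``$S_i+S_j$ ultimately reacts to $Y_k$ via $\mathcal Y_l$'' extends to a spanning tree rooted at $v_k$ (any directed path in a strongly connected graph extends to a spanning tree rooted at its endpoint), so that tree contributes the monomial $S_iS_j$, and — this is the point needing a little care — no cancellation can occur because all tree contributions are S-positive, so the coefficient of $S_iS_j$ in $\rho_k$ is strictly positive, hence $\mu_{i,j}^k\neq 0$.

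Finally, part (iii): $\widehat G_{\mathcal Y_l}$ is strongly connected iff every $Y_k\in\mathcal Y_l$ is a nonzero polynomial in \eqref{yelim}. For the forward direction, strong connectivity means in particular there is a directed path from $*$ to $Y_k$, hence (by (ii), or directly) some $\mu_{i,j}^k\neq 0$, so $Y_k\not\equiv 0$. For the converse I would argue contrapositively: suppose $\widehat G_{\mathcal Y_l}$ is not strongly connected. By Assumption (ii) every $Y_k$ reaches $*$ (there is always a tree rooted at $*$), so the obstruction is a node $v_k$ not reachable from $*$; equivalently, the set $U$ of nodes reachable from $*$ is a proper subset not containing $v_k$. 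I would then show $\rho_k=0$: every spanning tree rooted at $v_k$ would have to contain its unique $*\to v_m$ edge and a directed $*$-to-$v_k$ path, which would force $v_k\in U$, a contradiction; so $\Theta(v_k)=\emptyset$ and $\mu_{i,j}^k=0$ for all $i,j$, i.e.\ $Y_k\equiv 0$, and this $Y_k$ lies in $\mathcal Y_l$.

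The main obstacle I anticipate is the bookkeeping in (ii)–(iii): making precise, via the ``exactly one $*\to v_m$ edge'' structure of trees rooted inside $\widehat G_{\mathcal Y_l}$, the correspondence between monomials of $\rho_k$ and paths ``$S_i+S_j$ ultimately reacts to $Y_k$ via $\mathcal Y_l$'', and checking that S-positivity rules out cancellation so that ``a monomial appears'' really means ``its coefficient is nonzero''. Everything else is a direct reading of the Matrix-Tree formulas and the block structure $A=\diag(A_1,\dots,A_s)$ already set up in the text.
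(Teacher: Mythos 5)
Your proposal is correct and follows essentially the same route as the paper, whose ``proof'' is precisely the discussion preceding the proposition: the block decomposition $A=\diag(A_1,\dots,A_s)$, Cramer's rule, the Matrix-Tree theorem applied to the extended graphs $\widehat{G}_{\mathcal{Y}_l}$, the ``exactly one edge $*\to v_m$'' observation for trees rooted at $v_k$, and extension of directed paths to rooted spanning trees in the strongly connected case. You in fact spell out a couple of points the paper leaves implicit (the non-cancellation argument via S-positivity and the contrapositive for part (iii)), but the argument is the same.
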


\emph{Remark.}
The condition that $\widehat{G}_{\mathcal{Y}_l}$ is strongly connected is biochemically reasonable: The intermediate complexes are not the initial or final products of the system and should eventually be broken up into parts.

In Example \eqref{mainex}, 
the graph $\widehat{G}_{\mmY_1}$ has three spanning trees rooted at $*$ 
so that $\det(A_1)= b_{1,4}^2c_{1,2} + b_{2,3}^1c_{2,1}+  b_{1,4}^2b_{2,3}^1$. 
There is one spanning tree rooted at $Y_2$, giving $\rho_2=c_{1,2}a_{2,3}^1S_2S_3$, and two spanning trees rooted at $Y_1$, giving $\rho_1=(b_{1,4}^2+c_{2,1}) a_{2,3}^1S_2S_3$.  The graph $\widehat{G}_{\mmY_2}$ has one spanning tree rooted at $*$ so that $\det(A_2)= b_{3,5}^3+b_{4,5}^3$, and one spanning tree rooted at $Y_3$, giving $\rho_3= a_{4,5}^3S_4S_5$. Thus:
$$ Y_1=\mu_{2,3}^1S_2S_3, \quad Y_2=\mu_{2,3}^2S_2S_3, \quad Y_3= \mu_{4,5}^3  S_4 S_5 $$ 
with $\mu_{2,3}^1=\frac{(b_{1,4}^2+c_{2,1}) a_{2,3}^1}{\det(A_1)}$, $\mu_{2,3}^2=\frac{c_{1,2}a_{2,3}^1}{\det(A_1)}$, and $\mu_{4,5}^3 = \frac{a_{4,5}^3}{\det(A_2)}. $

\begin{lemma}\label{same}
Let $\widehat{G}_{\mmY}=\cup_{l} \widehat{G}_{\mmY_l}$. The graphs $\widehat{G}_{\mmY_l}$, $l=1,\ldots,s$, are strongly connected if and only if the graph $\widehat{G}_{\mmY}$ is. 
\end{lemma}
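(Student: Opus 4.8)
The plan is to prove the two implications separately, both being essentially immediate from the definition of $\widehat{G}_{\mmY}$ as the disjoint union $\cup_l \widehat{G}_{\mmY_l}$ — but with one subtlety about the shared node $*$. First I would make precise what $\widehat{G}_{\mmY}$ means: each $\widehat{G}_{\mmY_l}$ has node set $\mmY_l\cup\{*\}$, and in forming the union we identify all the copies of $*$ into a single node, so that $\widehat{G}_{\mmY}$ has node set $\mmY\cup\{*\}$ and its edges are exactly the edges $Y_k\xrightarrow{c_{k,v}}Y_v$ for $(k,v)\in I_c$ together with $Y_k\xrightarrow{b^k}*$ whenever $b^k\neq 0$ and $*\xrightarrow{a^k}Y_k$ whenever $a^k\neq 0$, for $k$ ranging over all of $\{1,\dots,P\}$. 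Since $c_{k,v}=c_{v,k}=0$ whenever $Y_k,Y_v$ lie in different blocks $\mmY_l$, the only node shared between distinct $\widehat{G}_{\mmY_l}$'s is $*$.

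For the direction ``each $\widehat{G}_{\mmY_l}$ strongly connected $\Rightarrow$ $\widehat{G}_{\mmY}$ strongly connected'': given any two nodes $u,w$ of $\widehat{G}_{\mmY}$, I want a directed path from $u$ to $w$. If $u=*$ or $w=*$, or if $u,w$ lie in the same block $\mmY_l$ (so also with $*$, they lie in $\widehat{G}_{\mmY_l}$), strong connectivity of that $\widehat{G}_{\mmY_l}$ gives the path directly. If $u\in\mmY_l$ and $w\in\mmY_{l'}$ with $l\neq l'$, concatenate a path from $u$ to $*$ inside $\widehat{G}_{\mmY_l}$ (which exists by strong connectivity of $\widehat{G}_{\mmY_l}$) with a path from $*$ to $w$ inside $\widehat{G}_{\mmY_{l'}}$. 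Here I must note that strong connectivity of each $\widehat{G}_{\mmY_l}$ already forces $*$ to have both an in-edge and an out-edge in each block — i.e.\ there is at least one $k$ with $Y_k\in\mmY_l$ and $b^k\neq 0$, and at least one with $a^k\neq 0$ — so the needed sub-paths through $*$ genuinely exist.

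For the converse, ``$\widehat{G}_{\mmY}$ strongly connected $\Rightarrow$ each $\widehat{G}_{\mmY_l}$ strongly connected'': fix a block $\mmY_l$ and two nodes $u,w\in\mmY_l\cup\{*\}$. By hypothesis there is a directed path $P$ from $u$ to $w$ in $\widehat{G}_{\mmY}$. The claim is that $P$ can be replaced by, or already is, a path staying inside $\widehat{G}_{\mmY_l}$. The key observation is structural: every edge of $\widehat{G}_{\mmY}$ is either an internal edge $Y_k\to Y_v$ of some single block, or an edge incident to $*$. Consequently, any directed walk that leaves block $\mmY_l$ must do so by passing through $*$, and any walk entering $\mmY_l$ must also pass through $*$. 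So if $u,w\in\mmY_l$ and the path $P$ ever leaves $\mmY_l$, then $*$ occurs on $P$; take the first occurrence and the last occurrence of $*$ on $P$, and replace the middle segment by nothing — the portion of $P$ from $u$ to the first $*$ lies entirely in $\mmY_l\cup\{*\}$ (before leaving it must hit $*$, and it hasn't yet hit $*$ so it's still in $\mmY_l$), and similarly the portion from the last $*$ to $w$; splicing these gives a path from $u$ to $w$ through $*$ inside $\widehat{G}_{\mmY_l}$. If one of $u,w$ is $*$, the argument is the same but with only one splice point. This yields strong connectivity of $\widehat{G}_{\mmY_l}$.

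The only mild obstacle is bookkeeping about the identification of the $*$ nodes and being careful that ``path in $\widehat{G}_{\mmY_l}$'' means a path all of whose edges are edges of $\widehat{G}_{\mmY_l}$ — once that is stated cleanly, both directions are short. I would keep the write-up to a couple of short paragraphs emphasizing the ``every edge is internal to a block or touches $*$'' observation, which is what does all the work, and noting in passing that strong connectivity of a block forces $*$ to have an in- and an out-edge within that block (needed for the forward direction).
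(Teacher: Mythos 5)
Your proposal is correct and follows essentially the same route as the paper: the forward direction by concatenating a path into $*$ within one block with a path out of $*$ within the other, and the converse by observing that any excursion out of $\mmY_l$ must pass through $*$, so one can splice the path at its first and last visits to $*$ to obtain a path lying entirely in $\widehat{G}_{\mmY_l}$. Your explicit handling of the endpoint cases involving $*$ and the remark that strong connectivity forces $*$ to have an in- and out-edge in each block are minor additions of care, not a different argument.
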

\begin{proof}
Assume that  the graphs $\widehat{G}_{\mmY_l}$ are strongly connected. Then, for any $v\in\widehat{G}_{\mmY_l}$ and $\omega \in  \widehat{G}_{\mmY_j}$, there are  directed paths  
 $v \rightarrow \ast$ in $\widehat{G}_{\mmY_l}$ and  $\ast \rightarrow v$ in $\widehat{G}_{\mmY_j}$, which by composition give a directed path between $v$ and $w$.

For the  reverse implication, let $v,\omega$ be two elements of $\mmY_l$. Since  $\widehat{G}_{\mmY}$ is strongly connected, there exists a directed path $\alpha\colon v\rightarrow w$ in $\widehat{G}_{\mmY}$. We can assume that  $v,w\neq \ast$. A path connecting an intermediate complex in $\mmY_l$ to one in $\mmY_j$ for $j\neq l$ must pass through $\ast$. If a path $\alpha$ goes through $\widetilde{v}\in \mmY_j$, for $j\neq l$, then it must go through $\ast$, first in and then out, potentially many times until it goes back to $\mmY_l$ and to $w$. Therefore, $\alpha$ has the form $v\xrightarrow{\alpha_{1}}\ast\xrightarrow{\beta} \ast\xrightarrow{\alpha_{2}} w$ with $\alpha_{1}$ and $\alpha_{2}$ being paths in $\widehat{G}_{\mmY_l}$. It follows that the path $v\xrightarrow{\alpha_{1}} \ast\xrightarrow{\alpha_{2}} w$ is a directed path from $v$ to $w$ in $\widehat{G}_{\mmY_l}$. 
\end{proof}

\subsection{Elimination of substrates}\label{partition}
Equation~\eqref{yelim}
shows that at  steady state the intermediate complexes are given as zero or S-positive rational functions in the substrates and the rate constants.  Insertion of \eqref{yelim}
into  the (time dependent) differential equations for the substrates   is the procedure known as the \emph{quasi-steady state} assumption. The rationale is that intermediate complexes tend to reach steady state much faster than substrates and thus some variables in the dynamical system can be eliminated. We have shown here that PTM systems ``mathematically'' enable  this simplification although justification is required in concrete examples.

We  now use the steady state equation \eqref{dS1}  to further eliminate some of the substrates in terms of others.  Recall equation  \eqref{dS1}, that  is $\dot{S}_i=0$,
\begin{align}\label{dS2-0}
0 =& \sum_{j=1}^N \sum_{k=1}^P \epsilon_{i,j} (-  a_{i,j}^k S_iS_j +b_{i,j}^k Y_{k})+  \sum_{j=1}^N (d_{j,i}S_j - d_{i,j}S_i) 
\end{align}
for  $i=1,\dots,N$.  After substitution of the values for  $Y_k$, we have 
\begin{equation}\label{dS2}
0=\sum_{u=1}^N \sum_{k=1}^P \sum_{j\leq t} \epsilon_{i,j}  b_{i,u}^k\mu_{j,t}^k S_j S_t   -\sum_{j=1}^N \sum_{k=1}^P \epsilon_{i,j}  a_{i,j}^k S_iS_j +  \sum_{j=1}^N (d_{j,i}S_j - d_{i,j}S_i),
\end{equation}
These equations are quadratic in  $\mathcal{S}$. To proceed with linear elimination  it is necessary to decide which variables are to be eliminated and which will be taken as part of the coefficient field. Since a monomial $S_iS_j$ appears only if $S_i$ and $S_j$ interact, we can proceed as long as $\mmS$ can be partitioned in an appropriate way. 

\begin{lemma}\label{Slinear}
Assume that there is a cut   $\mmS_{\alpha}$ with associated set $\mmY_{\alpha}$ (Definition~\ref{def1}).  Then,  \eqref{dS2}  for the substrates in $\mmS_{\alpha}$
 is a homogeneous  linear system of equations  in the substrates  $\mmS_{\alpha}$ and with coefficients in $\R(\Con\cup\,\mmS_{\alpha}^c)$.
\end{lemma}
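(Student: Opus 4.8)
The plan is to verify directly that when we substitute the expressions \eqref{yelim} into \eqref{dS2} and single out the equations $\dot{S}_i=0$ for $S_i\in\mmS_{\alpha}$, every surviving monomial is \emph{linear} in the variables $\mmS_{\alpha}$, with all coefficients lying in $\R(\Con\cup\,\mmS_{\alpha}^c)$. The three types of terms in \eqref{dS2} must be inspected separately: the quadratic ``production'' terms $\epsilon_{i,j}b_{i,u}^k\mu_{j,t}^k S_jS_t$ coming from dissociation of intermediates, the quadratic ``consumption'' terms $-\epsilon_{i,j}a_{i,j}^k S_iS_j$, and the linear terms $d_{j,i}S_j-d_{i,j}S_i$ coming from $R_d$.

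First I would handle the consumption terms. A monomial $S_iS_j$ with $a_{i,j}^k\neq 0$ arises only if $S_i$ and $S_j$ interact. Since $S_i\in\mmS_{\alpha}$ and $\mmS_{\alpha}$ is a cut, condition (ii)(a) of Definition~\ref{def1} forces $S_j\notin\mmS_{\alpha}$, i.e. $S_j\in\mmS_{\alpha}^c$ (note also $i\neq j$, since $\mmS_{\alpha}\cap\mmS_{\self}=\emptyset$, so $\epsilon_{i,j}=1$). Hence $-a_{i,j}^k S_iS_j$ is linear in $S_i\in\mmS_{\alpha}$ with coefficient $-a_{i,j}^kS_j\in\R(\Con\cup\,\mmS_{\alpha}^c)$. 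Next, the $R_d$ terms: if $d_{i,j}\neq0$ then $S_i$ and $S_j$ are 1-linked, and since $\mmS_{\alpha}$ is closed, $S_i\in\mmS_{\alpha}$ implies $S_j\in\mmS_{\alpha}$; thus $d_{j,i}S_j-d_{i,j}S_i$ is linear in $\mmS_{\alpha}$ with constant coefficients in $\R(\Con)$.

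The main work is the production terms $\sum_{u}\sum_k\sum_{j\le t}\epsilon_{i,j}b_{i,u}^k\mu_{j,t}^k S_jS_t$ for $S_i\in\mmS_{\alpha}$. A nonzero contribution requires $b_{i,u}^k\neq0$ (so $S_i+S_u$ reacts from $Y_k$, hence $S_i$ is 1-linked to $Y_k$, forcing $Y_k\in\mmY_{\alpha}$ by the definition of $\mmY_{\alpha}$) and $\mu_{j,t}^k\neq0$ (so by Proposition~\ref{Yelim}(i), $S_j+S_t$ ultimately reacts to $Y_k$ via $\mmY$; in particular $S_j+S_t$ reacts to some intermediate 1-linked to $Y_k$, which is again in $\mmY_{\alpha}$, so $S_j$ and $S_t$ are each 1-linked to a node of $\mmY_{\alpha}$ and $S_j,S_t$ interact). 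I would then invoke the ``closed'' condition: since $S_j,S_t$ interact and are 1-linked to $Y_k\in\mmY_{\alpha}$, at least one of $S_j,S_t$ lies in $\mmS_{\alpha}$. Combined with condition (ii)(a) (two substrates in $\mmS_{\alpha}$ do not interact), \emph{exactly one} of $S_j,S_t$ is in $\mmS_{\alpha}$ and the other in $\mmS_{\alpha}^c$; in particular $j\ne t$, so $\epsilon_{i,j}=1$. Therefore $S_jS_t$ is linear in the single variable from $\mmS_{\alpha}$, its coefficient being $b_{i,u}^k\mu_{j,t}^k$ times the $\mmS_{\alpha}^c$-variable, which lies in $\R(\Con\cup\,\mmS_{\alpha}^c)$. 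Summing over all valid $(u,k,j,t)$ preserves membership in this field.

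Finally, collecting the three analyses: each equation $\dot{S}_i=0$ for $S_i\in\mmS_{\alpha}$ is a $\R(\Con\cup\,\mmS_{\alpha}^c)$-linear combination of the variables $\mmS_{\alpha}$, with no constant term (every monomial contains exactly one factor from $\mmS_{\alpha}$), proving the system is homogeneous linear as claimed. The step I expect to be the crux is the production-term argument, specifically the chain ``$b_{i,u}^k\neq0 \Rightarrow Y_k\in\mmY_{\alpha}$'' together with ``$\mu_{j,t}^k\neq0 \Rightarrow S_j,S_t$ interact and are 1-linked to $\mmY_{\alpha}$'', since this is where Proposition~\ref{Yelim}(i) and the precise bookkeeping of what ``1-linked'' and ``ultimately reacts via $\mmY$'' entail must be combined with the closedness of $\mmS_{\alpha}$ to pin down that exactly one of $S_j,S_t$ is a core variable; the consumption and $R_d$ terms are comparatively immediate.
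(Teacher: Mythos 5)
Your proposal is correct and follows essentially the same route as the paper: split \eqref{dS2} into the dissociation, association, and $R_d$ terms, use condition (a) of a cut to place the interaction partner in $\mmS_{\alpha}^c$, closedness under 1-linkage for the $R_d$ terms, and Proposition~\ref{Yelim}(i) plus condition (b) of closedness to show exactly one of $S_j,S_t$ lies in $\mmS_{\alpha}$ in each surviving production monomial. You merely spell out in more detail than the paper why the relevant intermediates lie in $\mmY_{\alpha}$ (via $b_{i,u}^k\neq 0$ and backward propagation of 1-linkage), which is a faithful elaboration rather than a different argument.
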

\begin{proof}
For the three sums in \eqref{dS2} we make the following observations: If $S_i\in\mmS_{\alpha}$ and  $d_{j,i}\neq 0$, then also $S_j\in\mmS_{\alpha}$.  If $S_i\in\mmS_{\alpha}$ and  $a_{i,j}^k\neq 0$, then $S_j\not\in\mmS_{\alpha}$, otherwise $S_i$ and $S_j$ would  interact.
Finally, if $\mu_{j,t}^k\neq 0$, then according to Proposition~\ref{Yelim},  $S_j+S_t$ ultimately reacts to $Y_k$ via $\mmY_{\alpha}$.  Hence, since $\mmS_{\alpha}$ is a cut, one of  $S_j$ and $S_t$ (but not both) belongs to $\mmS_{\alpha}$.  Thus  \eqref{dS2} for the substrates in $\mmS_{\alpha}$ is a homogeneous linear system of equations in the species in $\mmS_{\alpha}$.  
\end{proof}

Assume that there exists a cut $\mmS_{\alpha}$ and that $\mmS_{\alpha}=\{S_1,\dots,S_{N_{\alpha}}\}$. 
 It follows that for $S_i\in \mmS_{\alpha}$ the equations in \eqref{dS2} form an $N_{\alpha}\times N_{\alpha}$ homogeneous linear system of equations with variables $\mmS_{\alpha}$ and coefficients in $\R(\Con\cup\, \mmS_{\alpha}^c)$. Further, $\epsilon_{i,j}=1$ if $S_i\in \mmS_{\alpha}$. 
 
 Let $B$ be the matrix with entries $\widetilde{b}_{i,j}$ for $i\not=j$ and $\widetilde{b}_{i,i} -  \widetilde{a}_i$ for $i=j$, where
$$ \widetilde{a}_i  = \sum_{j=1}^{N_1}  d_{i,j} + \sum_{j=N_1+1}^N \sum_{k=1}^P a_{i,j}^k S_j, \quad 
\widetilde{b}_{i,j} = d_{j,i}+\sum_{t=N_1+1}^N \sum_{k=1}^P b_i^k\mu_{j,t}^k S_t,
\quad b_i^k =   \sum_{u=N_1+1}^N b_{i,u}^k,$$
so that \eqref{dS2} becomes
\begin{equation}\label{B-system}
0= \sum_{j=1,j\neq i}^{N_1} \widetilde{b}_{i,j}S_j  + (\widetilde{b}_{i,i} -  \widetilde{a}_i) S_i.
\end{equation}

Consider  Example  \eqref{mainex} and the cut $\mmS_{\alpha}=\{S_1,S_2,S_5\}$. Then the equations \eqref{dS2}  are
$ 0 = -d_{1,2}S_1 + b_{1,4}^2\mu_{2,3}^2S_2S_3$ and $0 =d_{1,2}S_1 -a_{2,3}^1 S_2S_3 + b_{2,3}^1 \mu_{2,3}^1S_2S_3$, corresponding to $\dot{S}_1=0$ and $\dot{S}_2=0$, respectively. The equation $\dot{S_5}=0$ is trivial because of the conservation law $\dot{Y}_3+\dot{S}_5=0$. Further, we have
$\widetilde{a}_1= d_{1,2}$, $\widetilde{a}_2= a_{2,3}^1S_3$, $\widetilde{b}_{1,2}=b_{1,4}^2\mu_{2,3}^2S_3$, $  \widetilde{b}_{2,1}=d_{1,2}$, $\widetilde{b}_{2,2}=b_{2,3}^1 \mu_{2,3}^1S_3,$
while  the rest of the coefficients  are zero.

Lemma~\ref{cut_graph} ensures that there is a conservation law for each connected component of  $G_{\mmS_{\alpha},\mmY_{\alpha}}$.  Let $C_1^{\alpha},\dots,C^{\alpha}_{n_{\alpha}}$ be the node sets of  the connected components and define $\mmS_{{\alpha},l}=\mmS_{\alpha} \cap C_l^{\alpha}$ and $\mmY_{{\alpha},l}=\mmY_{\alpha} \cap C_l^{\alpha}$ so that $ \sum_{S_i\in \mmS_{{\alpha},l}} \dot{S}_i +\sum_{Y_k\in  \mmY_{{\alpha},l}}\dot{Y}_k=0$ for  $l=1,\dots,n_{\alpha}$, are conservation laws.  Imposing only that the intermediate complexes are at steady state, that is $\dot{Y_k}=0$ for all $k$, we obtain
\begin{equation}\label{dS0} \sum_{S_i\in \mmS_{{\alpha},l}} \dot{S_i}=0,\qquad l=1,\dots,n_{\alpha}.  
\end{equation}
It follows that the column sums of the matrix $B$ restricted to the rows corresponding to the substrates in $\mmS_{{\alpha},l}$ are all zero.  Consequently,  the matrix $B$ has rank at most  $N_{\alpha}-n_{\alpha}$.

 Let $G_{\mmY_{\alpha}}$  be $G_{\mmY}$  restricted to the nodes $\mmY_{\alpha}$. It follows from the definition of $\mmY_{\alpha}$ (Definition~\ref{def1}) that $G_{\mmY_{\alpha}}$ is a union of connected components of $G_{\mmY}$. Define  $\widehat{G}_{\mmY_{\alpha}}$ similarly (cf. Lemma \ref{same}). Let $N_{\alpha,l}$ be the cardinality of $S_{\alpha,l}$.
 
\begin{lemma}\label{block}
After reordering  the substrates  in $\mmS_{\alpha}$, $B$ is a block diagonal matrix, namely $\diag(B_1,\ldots,B_{n_{\alpha}})$, where $B_l$ is an $N_{\alpha,l}\times N_{\alpha,l}$ matrix.
Further,  if $\widetilde{b}_{i,j}\neq 0$ then there is a reaction $S_j\rightarrow S_i$ or there exist $S_u,S_t\in \mmS_{\alpha}^c$, so that   $S_j+S_t$ ultimately reacts to $S_i+S_u$ via $\mmY_{\alpha}$.
If in addition $\widehat{G}_{\mmY_{\alpha}}$ is strongly connected, then the reverse is true.
\end{lemma}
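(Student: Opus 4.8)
The plan is to analyze the structure of the matrix $B$ through its defining entries $\widetilde b_{i,j}$ and relate non-zero off-diagonal entries to reaction paths, then use this to establish the block-diagonal form. First I would observe that $\widetilde b_{i,j}$ (for $i\neq j$, $S_i,S_j\in\mmS_\alpha$) is a sum of two types of terms: a term $d_{j,i}$ coming from a direct reaction $S_j\to S_i$, and terms $b_i^k\mu_{j,t}^k S_t$ with $S_t\in\mmS_\alpha^c$, which are non-zero only if both $b_i^k=\sum_u b_{i,u}^k\neq 0$ (so some $Y_k\to S_i+S_u$) and $\mu_{j,t}^k\neq 0$. By Proposition~\ref{Yelim}(i), the latter forces $S_j+S_t$ to ultimately react to $Y_k$ via $\mmY_\alpha$; combining with the reaction $Y_k\to S_i+S_u$ gives that $S_j+S_t$ ultimately reacts to $S_i+S_u$ via $\mmY_\alpha$, with $S_u\in\mmS_\alpha^c$ (since $Y_k\in\mmY_\alpha$ and $\mmS_\alpha$ is a cut, $Y_k$ is 1-linked to exactly one of $S_i,S_u$, namely $S_i\in\mmS_\alpha$, so $S_u\notin\mmS_\alpha$) and $S_t\in\mmS_\alpha^c$ by construction. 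This proves the second assertion.

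Next, for the block-diagonal claim, I would show that $\widetilde b_{i,j}\neq 0$ implies $S_i$ and $S_j$ lie in the same connected component $C_l^\alpha$ of $G_{\mmS_\alpha,\mmY_\alpha}$. In the first case, $S_j\to S_i$ means $d_{j,i}\neq 0$, so $S_i,S_j$ are 1-linked and hence in the same component. In the second case, the chain $S_j+S_t\to\cdots\to S_i+S_u$ passes only through intermediates in $\mmY_\alpha$; tracing the 1-linkage along this chain (each step either a reaction $S_p+S_q\rightleftarrows Y_m$ or $Y_m\rightleftarrows Y_{m'}$ with all $Y$'s in $\mmY_\alpha$) shows that $S_j$ and $S_i$ are connected to a common intermediate complex in $G_{\mmS_\alpha,\mmY_\alpha}$, hence lie in the same $C_l^\alpha$. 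Consequently, after reordering $\mmS_\alpha$ so that substrates in the same component are grouped together, $B$ has no non-zero entries linking distinct blocks, i.e.\ $B=\diag(B_1,\dots,B_{n_\alpha})$ with $B_l$ of size $N_{\alpha,l}\times N_{\alpha,l}$.

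For the final (reverse) statement, assume $\widehat G_{\mmY_\alpha}$ is strongly connected, and suppose there is a reaction $S_j\to S_i$ or a chain $S_j+S_t$ ultimately reacting to $S_i+S_u$ via $\mmY_\alpha$ with $S_u,S_t\in\mmS_\alpha^c$. In the first case $d_{j,i}\neq 0$ contributes directly to $\widetilde b_{i,j}$, and since all other contributions to $\widetilde b_{i,j}$ are S-positive (by Proposition~\ref{Yelim}, the $\mu_{j,t}^k$ are zero or S-positive, and the $b_i^k$ are sums of non-negative rate constants), no cancellation occurs and $\widetilde b_{i,j}\neq 0$. In the second case, the chain factors as $S_j+S_t\to Y_{k_0}\to\cdots\to Y_k\to S_i+S_u$ for some $Y_{k_0},Y_k\in\mmY_\alpha$; then $a_{j,t}^{k_0}\neq 0$ and $b_{i,u}^k\neq 0$, so $b_i^k\neq 0$, and by Proposition~\ref{Yelim}(ii) (applicable because $\widehat G_{\mmY_l}$ is strongly connected for each relevant $l$, by Lemma~\ref{same} applied to $\widehat G_{\mmY_\alpha}$, together with the fact that ``ultimately reacts via $\mmY_\alpha$'' refines to ``via $\mmY_l$'' for the single component $\mmY_l$ containing $Y_{k_0},\dots,Y_k$) we get $\mu_{j,t}^k\neq 0$. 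Hence the term $b_i^k\mu_{j,t}^k S_t$ is a non-zero S-positive summand of $\widetilde b_{i,j}$, and again S-positivity of all summands rules out cancellation, so $\widetilde b_{i,j}\neq 0$.

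The main obstacle I anticipate is the reverse implication, specifically the careful bookkeeping needed to invoke Proposition~\ref{Yelim}(ii): one must verify that strong connectivity of $\widehat G_{\mmY_\alpha}$ indeed descends (via Lemma~\ref{same}, whose statement is phrased for the partition of all of $\mmY$ but applies verbatim to the union $\mmY_\alpha$ of connected components) to each block $\widehat G_{\mmY_l}$, and that ``$S_j+S_t$ ultimately reacts to $Y_k$ via $\mmY_\alpha$'' can be upgraded to ``via $\mmY_l$'' — which holds because a path through intermediates cannot leave the connected component $\mmY_l$ of $Y_k$ in $G_{\mmY_\alpha}$. The remaining subtlety is the no-cancellation argument, which rests on the observation that every summand of $\widetilde b_{i,j}$ for $i\neq j$ is S-positive (a sum of products of rate constants, S-positive $\mu$'s, and the variable $S_t$), so distinct monomials in the variables $\mmS_\alpha^c$ cannot cancel; this is routine once the S-positivity from Proposition~\ref{Yelim} is in hand.
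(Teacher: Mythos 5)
Your proposal is correct and follows essentially the same route as the paper: characterize the non-zero entries $\widetilde b_{i,j}$ via Proposition~\ref{Yelim}(i), deduce block-diagonality from the fact that a non-zero entry forces $S_i,S_j$ into the same connected component of $G_{\mmS_{\alpha},\mmY_{\alpha}}$, and use Proposition~\ref{Yelim}(ii) plus S-positivity (no cancellation) for the converse; you in fact spell out two points the paper leaves implicit, namely the descent of strong connectivity to each $\widehat{G}_{\mmY_l}$ and the no-cancellation argument. One cosmetic slip: $Y_k$ is 1-linked to \emph{both} $S_i$ and $S_u$ when $Y_k\rightarrow S_i+S_u$; the reason $S_u\in\mmS_{\alpha}^c$ is that $S_i$ and $S_u$ interact and a cut contains no interacting pair (equivalently, it is built into the definition of $b_i^k$), but your conclusion is unaffected.
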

\begin{proof}
 It follows from Lemma \ref{same} and Proposition \ref{Yelim}(i) that if $\mu_{j,t}^k\neq 0$ then $S_j+S_t$ ultimately reacts to $Y_k$. By definition, $b_i^k\neq 0$ if and only if there exists a reaction $Y_k\rightarrow S_i+S_u$ for some $S_u\in \mmS_{\alpha}^c$. 
We have  $\widetilde{b}_{i,j}\neq 0$ if and only if  $d_{j,i}\neq 0$ or $b_i^k\mu_{j,t}^k\neq 0$ for some $k$ and $t$, 
and hence either there is a reaction $S_j\rightarrow S_i$ or there exist $S_u,S_t\in \mmS_{\alpha}^c$, so that   
$S_j+S_t$ ultimately reacts to $S_i+S_u$ via $\mmY_{\alpha}$. If $\widehat{G}_{\mmY_{\alpha}}$ is strongly connected then by Proposition \ref{Yelim}(ii) the existence of these reactions is a sufficient condition. It follows, after reordering of the  species in $\mmS_{\alpha}$, that $B$ is a block diagonal matrix with blocks given by the species in each connected component of $G_{\mmS_{\alpha},\mmY_{\alpha}}$. Indeed, if $S_i,S_j$ are in different components, then $\widetilde{b}_{i,j}=\widetilde{b}_{j,i}=0$. 
\end{proof}

It follows from the lemma that a necessary condition for $\widetilde{b}_{i,j}\neq 0$ is that $S_i$ can be ``produced'' from $S_j$. We restrict the study to the case where $G_{\mmS_{\alpha},\mmY_{\alpha}}$ is connected and note that the results apply to every connected component individually. However, the propositions to be derived below are stated in full generality, that is, without the assumption that $G_{\mmS_{\alpha},\mmY_{\alpha}}$ is connected.

Using \eqref{dS0}, the column sums of  $B$ are zero. Thus, $B$ is the Laplacian of a labeled directed graph $G_{\mmS_{\alpha}}$ with node set $\mmS_{\alpha}$ and an edge from $S_j$ to $S_i$ whenever  
$\widetilde{b}_{i,j}\neq 0$, $i\not=j$. Note that $\widetilde{b}_{i,j}\in \R(\Con)[\mmS_{\alpha}^c]$ is S-positive.

Since $G_{\mmS_{\alpha},\mmY_{\alpha}}$ is connected, then so is  $G_{\mmS_{\alpha}}$.  In general, two species $S_i,S_j$ belong to the same connected component of $G_{\mmS_{\alpha},\mmY_{\alpha}}$ if and only if they belong to the same connected component of $G_{\mmS_{\alpha}}$. We will use this fact repeatedly in what follows. 

By the Matrix-Tree theorem, the principal minors $B_{(i,j)}$ of $B=\mathcal{L}(G_{\mmS_{\alpha}})$ are 
$$B_{(i,j)} =(-1)^{N_{\alpha}-1+i+j}   \sum_{\tau \in \Theta(S_j)}  \pi(\tau).   $$ 
Thus, $B$ has rank $N_{\alpha}-1$ if and only if there exists at least one spanning tree in $G_{\mmS_{\alpha}}$ rooted at some $S_j$ with $j\in \{1,\dots,N_{\alpha}\}$.  For a general PTM system with a selected cut $\mmS_{\alpha}$, we obtain the following proposition:
\begin{proposition}\label{maxrank}
The non-interacting graphs provide all  conser\-vation laws involving only the substrates $\mmS_{\alpha,l}$ if and only if $G_{\mmS_{\alpha,l}}$  has at least one rooted spanning tree for all $l$. 
\end{proposition}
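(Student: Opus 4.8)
The plan is to use the block decomposition $B=\diag(B_1,\dots,B_{n_\alpha})$ from Lemma~\ref{block}, where each $B_l=\mathcal{L}(G_{\mmS_{\alpha,l}})$ is the submatrix of $B$ indexed by the substrates in one connected component $\mmS_{\alpha,l}$, reduce the statement to a rank computation on each $B_l$, and then invoke the Matrix-Tree theorem exactly as in the paragraph preceding the proposition.

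First I would pin down what ``a conservation law involving only the substrates $\mmS_{\alpha,l}$'' means once the intermediate complexes have been eliminated: it is a linear combination $\sum_{S_i\in\mmS_{\alpha,l}}\lambda_i\dot S_i$ that vanishes identically after $\dot Y_k=0$ is imposed for all $k$ (i.e.\ after substituting \eqref{yelim}). Writing $\dot S=B\,S$ for the column vectors indexed by $\mmS_\alpha$ and using that the $S_j$ with $S_j\in\mmS_\alpha$ are algebraically independent over $\R(\Con\cup\mmS_\alpha^c)$, such a combination is exactly a row vector $\lambda$ with $\lambda^{t}B_l=0$, i.e.\ an element of $\ker B_l^{t}$. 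By Lemma~\ref{cut_graph}, $\mmS_{\alpha,l}$ is a minimal cut, so $G_{\mmS_{\alpha,l},\mmY_{\alpha,l}}$ is the only non-interacting graph whose substrate set is contained in $\mmS_{\alpha,l}$; its conservation law $\omega_l$ becomes, after imposing $\dot Y_k=0$, precisely the relation \eqref{dS0}, i.e.\ the all-ones vector $\mathbf{1}\in\ker B_l^{t}$. Hence ``the non-interacting graphs provide all conservation laws involving only $\mmS_{\alpha,l}$'' is equivalent to $\ker B_l^{t}=\langle\mathbf{1}\rangle$, i.e.\ to $\operatorname{rank}B_l=N_{\alpha,l}-1$.

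Next I would note that the column sums of $B_l$ vanish by \eqref{dS0}, so $\operatorname{rank}B_l\le N_{\alpha,l}-1$ automatically, and therefore $\operatorname{rank}B_l=N_{\alpha,l}-1$ if and only if some maximal minor $(B_l)_{(i,j)}$ is nonzero. Since $G_{\mmS_{\alpha,l}}$ is connected (two substrates lie in the same component of $G_{\mmS_\alpha}$ if and only if they do in $G_{\mmS_\alpha,\mmY_\alpha}$) and its edge labels $\widetilde{b}_{i,j}\in\R(\Con)[\mmS_\alpha^c]$ are nonzero S-positive, the Matrix-Tree theorem gives $(B_l)_{(i,j)}=\pm\sum_{\tau\in\Theta(S_j)}\pi(\tau)$ with every $\pi(\tau)$ nonzero; thus some $(B_l)_{(i,j)}$ is nonzero if and only if $\Theta(S_j)\neq\emptyset$ for some $S_j\in\mmS_{\alpha,l}$, i.e.\ $G_{\mmS_{\alpha,l}}$ has at least one rooted spanning tree. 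Chaining the equivalences and quantifying over $l$ gives the proposition.

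The main obstacle is the bookkeeping in the first step: making the identification of ``conservation laws involving only $\mmS_{\alpha,l}$'' with $\ker B_l^{t}$ airtight, and checking that among all non-interacting graphs only $G_{\mmS_{\alpha,l},\mmY_{\alpha,l}}$ can contribute such a law (which is where minimality of the cut $\mmS_{\alpha,l}$ enters), so that ``all such laws come from non-interacting graphs'' collapses to the single rank condition $\operatorname{rank}B_l=N_{\alpha,l}-1$. Once that is in place, the rest is a routine application of the Matrix-Tree theorem together with the S-positivity of the labels $\widetilde{b}_{i,j}$ already established.
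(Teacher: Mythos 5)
Your proposal is correct and follows essentially the same route as the paper: the paper's proof likewise reduces the statement to the condition $\operatorname{rank}B_l=N_{\alpha,l}-1$ (equivalently, that every conservation law among $\mmS_{\alpha,l}\cup\mmY_{\alpha,l}$ is a multiple of $\sum_{S_i\in\mmS_{\alpha,l}}S_i+\sum_{Y_k\in\mmY_{\alpha,l}}Y_k$) and then cites the Matrix-Tree computation of the maximal minors of $B_l$ given just before the proposition. You supply more detail than the paper does on the identification of such conservation laws with the left kernel of $B_l$ and on why S-positivity of the labels prevents cancellation in $\sum_{\tau\in\Theta(S_j)}\pi(\tau)$, but the underlying argument is the same.
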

\begin{proof}
 The non-interacting graphs provide all  conservation laws involving only $\mmS_{\alpha,l}$ if and only if all conservation  laws are  multiples of  $ \sum_{S_i\in \mmS_{{\alpha},l}} S_i +\sum_{Y_k\in  \mmY_{{\alpha},l}}Y_k=0$, which is the case if and only if the rank of $B_l$ is $N_{\alpha,l}-1$. As stated above this is equivalent to the existence of a rooted spanning tree in $G_{\mmS_{\alpha,l}}$.\end{proof}

\emph{Remark. } In particular, the lemma holds  if $G_{\mmS_{\alpha}}$ is strongly connected.
If $\widehat{G}_{\mmY_{\alpha}}$ is strongly connected, then to check that  $G_{\mmS_{\alpha}}$ is strongly connected we do not need to calculate the labels of  $G_{\mmS_{\alpha}}$. Whether  there is an edge or not between two nodes follows from the set of reactions, cf. Lemma~\ref{block}.

For simplicity we assume that there exists a spanning tree rooted at $S_1$. 
Then, the variables $S_2,\dots,S_{N_{\alpha}}$ can be solved in the coefficient field $\R(\Con\cup\, \mmS_{\alpha}^c\cup \{S_1\})$. In particular,  using Cramer's rule  and the Matrix-Tree theorem, we obtain
\begin{equation}\label{Srational}
 S_j = \frac{(-1)^{j+1}B_{(1,j)}}{B_{(1,1)}} = \frac{\sigma_j(\mmS_{\alpha}^c)}{\sigma(\mmS_{\alpha}^c)} S_1=r^S_j(\mmS_{\alpha}^c)S_1, \textrm{ where }
\begin{cases}
\sigma(\mmS_{\alpha}^c) &\hspace{-0.2cm}=  \sum_{\tau \in \Theta(S_1)}  \pi(\tau)\neq 0\\ \sigma_j(\mmS_{\alpha}^c) &\hspace{-0.2cm}=  \sum_{\tau \in \Theta(S_j)}  \pi(\tau)
\end{cases}
\end{equation} 
and $j=2,\dots, N_{\alpha}$. It follows that  $\sigma(\mmS_{\alpha}^c)$ is S-positive and $\sigma_j(\mmS_{\alpha}^c)$ is either a zero or S-positive  element of $\R(\Con)[\mmS_{\alpha}^c]$. 
If the graph $G_{\mmS_{\alpha}}$  is strongly connected, then $\sigma_j(\mmS_{\alpha}^c) \neq 0$ for all $j$ and  any choice of $S_j$ could be used instead of $S_1$. Further:

\begin{proposition}\label{snonzero}
A connected component $G_{\mmS_{\alpha,l}}$ of the graph $G_{\mmS_{\alpha}}$  is strongly connected if and only if $\sigma_j(\mmS_{\alpha}^c)$ is a non-zero rational function in $\R(\Con\cup\, \mmS_{\alpha}^c)$ for all $S_j\in \mmS_{\alpha,l}$.
\end{proposition}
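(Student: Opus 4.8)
\emph{Proof proposal.} The plan is to reduce the statement to a purely combinatorial equivalence about directed graphs, using the Matrix-Tree theorem together with the S-positivity of the edge labels. First I would recall that, by the discussion preceding \eqref{Srational}, every edge label $\widetilde{b}_{i,j}$ of $G_{\mmS_{\alpha}}$ is an S-positive element of $\R(\Con)[\mmS_{\alpha}^c]$, in particular non-zero. Working inside the connected component $G_{\mmS_{\alpha,l}}$ containing $S_j$ (so that the Matrix-Tree theorem applies to the corresponding Laplacian block $B_l$), the quantity in question is $\sigma_j(\mmS_{\alpha}^c)=\sum_{\tau\in\Theta(S_j)}\pi(\tau)$, where $\Theta(S_j)$ is the set of spanning trees of $G_{\mmS_{\alpha,l}}$ rooted at $S_j$ and $\pi(\tau)$ is the product of the labels along $\tau$. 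Since each $\pi(\tau)$ is a product of S-positive rational functions, it is S-positive, and a sum of S-positive rational functions over a non-empty index set is again S-positive, hence non-zero. Therefore $\sigma_j(\mmS_{\alpha}^c)\neq 0$ in $\R(\Con\cup\,\mmS_{\alpha}^c)$ if and only if $\Theta(S_j)\neq\emptyset$, i.e. $G_{\mmS_{\alpha,l}}$ admits at least one spanning tree rooted at $S_j$.

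Thus the proposition becomes: a (weakly) connected directed graph has a spanning tree rooted at each of its nodes if and only if it is strongly connected. For the ``if'' direction, assume $G_{\mmS_{\alpha,l}}$ is strongly connected and fix a node $S_j$. As recalled in the preliminary discussion on spanning trees, in a strongly connected graph any directed path into $S_j$ extends to a spanning tree rooted at $S_j$; since such a path exists (from any other node, or trivially if the component is a single node), we get $\Theta(S_j)\neq\emptyset$, hence $\sigma_j(\mmS_{\alpha}^c)\neq 0$, for every $S_j\in\mmS_{\alpha,l}$. For the ``only if'' direction, suppose $\sigma_j(\mmS_{\alpha}^c)\neq 0$, i.e. $\Theta(S_j)\neq\emptyset$, for every $S_j\in\mmS_{\alpha,l}$. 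Given any two nodes $S_i,S_j$ of the component, a spanning tree rooted at $S_j$ contains the unique path between $S_i$ and $S_j$ directed from $S_i$ to $S_j$; in particular there is a directed path $S_i\to S_j$ in $G_{\mmS_{\alpha,l}}$. As $S_i,S_j$ were arbitrary, $G_{\mmS_{\alpha,l}}$ is strongly connected.

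I do not expect a genuine obstacle here; the only point requiring a little care is the first one, namely that $\sigma_j(\mmS_{\alpha}^c)$ vanishing \emph{as a rational function} is truly equivalent to the absence of a rooted spanning tree, which a priori could fail if the terms $\pi(\tau)$ cancelled. This is exactly what S-positivity rules out: all coefficients occurring are non-negative, so no cancellation can take place and the sum is zero precisely when $\Theta(S_j)$ is empty. The residual graph-theoretic equivalence is standard and, indeed, half of it is already invoked in the earlier remarks on extending directed paths to rooted spanning trees.
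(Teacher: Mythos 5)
Your proof is correct and takes essentially the same route as the paper: the paper leaves this proposition as an immediate consequence of the Matrix--Tree expansion $\sigma_j(\mmS_{\alpha}^c)=\sum_{\tau\in\Theta(S_j)}\pi(\tau)$ together with the S-positivity of the edge labels of $G_{\mmS_{\alpha}}$ (so no cancellation can occur and vanishing is equivalent to $\Theta(S_j)=\emptyset$), which is exactly your reduction. The remaining graph-theoretic equivalence you spell out --- a rooted spanning tree exists at every node if and only if the component is strongly connected --- is the standard fact the paper implicitly invokes via its preliminary remark on extending directed paths to rooted spanning trees.
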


The results shown above provide a proof of the following lemma.

\begin{lemma}\label{appear}
If a substrate $S_t\in\mmS_{\alpha}^c$ is a  variable in the rational function $r_j^S(\mmS_{\alpha}^c)$ for some $S_j\in \mmS_{\alpha}$, then  there is $S_i\in \mmS_{\alpha}$  and  $S_u\in \mmS_{\alpha}^c$, such that $S_i+S_t$ ultimately reacts to  $S_j+S_u$ via $\mmY_{\alpha}$.
\end{lemma}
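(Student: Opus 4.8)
The plan is to read the occurrence of $S_t$ in $r_j^S$ off the Matrix--Tree description in \eqref{Srational}, convert a single offending edge label into a reaction chain via Lemma~\ref{block} and Proposition~\ref{Yelim}, and then propagate from the endpoints of that edge to the prescribed substrate $S_j$. First I would reduce the hypothesis to a statement about one edge of $G_{\mmS_{\alpha}}$. Since $r_j^S=\sigma_j/\sigma$, any variable of $r_j^S$ is a variable of the numerator $\sigma_j$ or of the denominator $\sigma$. By \eqref{Srational} we have $\sigma_j=\sum_{\tau\in\Theta(S_j)}\pi(\tau)$ and $\sigma=\sum_{\tau\in\Theta(S_1)}\pi(\tau)$, and $\pi(\tau)=\prod\widetilde b_{q,p}$ over the edges $S_p\to S_q$ of $\tau$. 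Hence $S_t$ occurring in $r_j^S$ means that some spanning tree, rooted at $S_j$ or at $S_1$, has an edge $S_p\to S_q$ whose label $\widetilde b_{q,p}$ contains the variable $S_t$.

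Next I would translate that edge into reactions. By the formula for $\widetilde b_{q,p}$ in Section~\ref{partition}, the variable $S_t$ can enter $\widetilde b_{q,p}$ only through a term $b_q^k\mu_{p,t}^kS_t$, so $b_q^k\neq0$ and $\mu_{p,t}^k\neq0$ for some $k$ with $Y_k\in\mmY_{\alpha}$. The definition of $b_q^k$ gives a dissociation $Y_k\to S_q+S_u$ with $S_u\in\mmS_{\alpha}^c$, and by Proposition~\ref{Yelim}(i) the condition $\mu_{p,t}^k\neq0$ forces $S_p+S_t$ to ultimately react to $Y_k$ via $\mmY_{\alpha}$; in particular $S_p+S_t\to Y_{k_1}$ is an association reaction, so $S_t$ genuinely associates with the substrate $S_p\in\mmS_{\alpha}$. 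Composing the two, $S_p+S_t$ ultimately reacts to $S_q+S_u$ via $\mmY_{\alpha}$, and $S_p,S_q,Y_k$ lie in a single connected component of $G_{\mmS_{\alpha},\mmY_{\alpha}}$; since $S_t$ really occurs in $r_j^S$ we have $\sigma_j\neq0$, so $S_j$ is a node of that same component (and, in the relevant case where all $\sigma_{j'}\neq0$, the component is strongly connected by Proposition~\ref{snonzero}).

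The hard part will be the final step: upgrading the endpoints $S_p,S_q$ of the edge to the prescribed $S_j$ on the \emph{product} side, i.e.\ producing a chain $S_i+S_t\to\cdots\to S_j+S_{u'}$ rather than merely $S_p+S_t\to\cdots\to S_q+S_u$. One cannot simply concatenate the reaction realizing the offending edge with the reactions realizing the remaining edges of the tree path $S_q\to\cdots\to S_j$ toward the root, because the element of $\mmS_{\alpha}^c$ produced by one dissociation need not equal the one consumed by the next association, so the intermediate complexes fail to match; this catalyst mismatch is exactly the gap I must close and must not wave away. The route I would take is to keep the association $S_p+S_t\to\cdots\to Y_k$ fixed, set $S_i=S_p$, and instead show that $Y_k$ itself ultimately reacts to $S_j+S_{u'}$ for some $S_{u'}\in\mmS_{\alpha}^c$: using strong connectivity of $\widehat G_{\mmY_{\alpha}}$ on the component, every intermediate of the component is reachable from $Y_k$, so it suffices to exhibit, within that component, a dissociation whose product complex contains the specific node $S_j$. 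Establishing the existence of such a dissociation \emph{to $S_j$} — equivalently, that the prescribed $S_j$ occurs as a dissociation product reachable from $Y_k$ and not merely as some association reactant (the case that naturally arises when $S_t$ comes from the denominator $\sigma$, whose trees are rooted at $S_1$ and thus point toward $S_1$ rather than $S_j$) — is the delicate point on which the whole argument turns, and is where I expect to have to exploit that $S_j$ is a vertex of the strongly connected component containing the edge, rather than an arbitrary substrate.
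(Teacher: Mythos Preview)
Your unpacking of $r_j^S=\sigma_j/\sigma$ through the Matrix--Tree formula \eqref{Srational} and your extraction, via Lemma~\ref{block}, of a chain $S_p+S_t\to\cdots\to S_q+S_u$ through $\mmY_\alpha$ from a single offending edge label $\widetilde b_{q,p}$ is exactly what the paper intends: the paper gives no separate argument for this lemma, only the line ``the results shown above provide a proof of the following lemma.'' So at the level of method you and the paper coincide.

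You are also right that the final step --- forcing the product side to be the prescribed $S_j$ rather than merely some $S_q$ in the same component --- is the crux, and in fact it cannot be closed in the stated generality. Take $\mmS_\alpha=\{S_1,S_2,S_3\}$, $\mmS_\alpha^c=\{E_1,E_2\}$ and reactions $E_1+S_1\to Y_1\to E_1+S_2$, $E_2+S_2\to Y_2\to E_2+S_3$, $E_2+S_3\to Y_3\to E_2+S_1$. Here $G_{\mmS_\alpha}$ is the directed $3$-cycle, $\widehat G_{\mmY_\alpha}$ is strongly connected, and one checks that $r_3^S$ genuinely depends on $E_1$; yet from any $S_i+E_1$ the only complexes reachable via $\mmY_\alpha$ are $Y_1$ and $E_1+S_2$, never $S_3+u$. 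Your proposed rescue --- routing from $Y_k$ to a dissociation containing $S_j$ using strong connectivity of $\widehat G_{\mmY_\alpha}$ --- breaks for precisely the catalyst-mismatch reason you yourself diagnosed one paragraph earlier: a directed path in $\widehat G_{\mmY_\alpha}$ that passes through $*$ is not an honest reaction chain. What the preceding results \emph{do} establish, and what suffices for the paper's sole use of the lemma in Section~\ref{SigCasc}, is the weaker assertion with $S_j$ replaced by some $S_q\in\mmS_\alpha$ in the same connected component of $G_{\mmS_\alpha,\mmY_\alpha}$ as $S_j$: in that application, any off-diagonal label $\widetilde b_{q,p}$ with $p\neq q$ in $\Subn^i$ can only involve intermediates in $\mmY^i$, which already forces $S_t\in\Enzn^i$.
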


After substitution of the value of $S_j$ given in \eqref{Srational} into $Y_k$  \eqref{yelim} we obtain
\begin{equation}\label{ysubst} Y_k= r^Y_k(\mmS_{\alpha}^c) S_1,\end{equation}
 where $r_k^Y$ is either zero or  an S-positive rational function in $\mmS_{\alpha}^c$ with coefficients in $\R(\Con)$. 
If $\widehat{G}_{\mmY_{\alpha}}$ is strongly connected then this function is non-zero.

\medskip
{\bf Conservation laws.} 
The sum of the species concentrations in  $G_{\mmS_{\alpha},\mmY_{\alpha}}$ is conserved. If the total amount  $\oS_1=S_1+\dots+S_{N_{\alpha}}+Y_1+ \dots+ Y_{P_{\alpha}}$  is given, we obtain
$$ \oS_1 = (1+\,r_2^S\!(\mmS_{\alpha}^c)+\dots+r_{N_{\alpha}}^S\!(\mmS_{\alpha}^c) + r_1^Y\!(\mmS_{\alpha}^c)+\dots + r_{P_{\alpha}}^Y\!(\mmS_{\alpha}^c)\,)S_1,$$ 
where the coefficient of $S_1$ is an S-positive element of $\R(\Con\cup\, \mmS_{\alpha}^c)$ and thus,
$$ S_1 = \overline{r}^S_1(\mmS_{\alpha}^c),$$ 
with $\overline{r}^S_1$ an S-positive rational function in $\mmS_{\alpha}^c$ with coefficients in $\R(\Con\cup \{\oS_1\})$.

Further, if $\oS_1>0$ then $S_1\neq 0$ at steady state and $S_1> 0$ for non-negative values of the substrates in  $\mmS_{\alpha}^c$.
This remark and Proposition \ref{snonzero} imply:

\begin{proposition}\label{snonzero2}
A connected component $G_{\mmS_{\alpha,l}}$ of the graph $G_{\mmS_{\alpha}}$  is strongly connected if and only if any steady state solution satisfies $S_j\neq 0$ for all $S_j\in \mmS_{\alpha,l}$, and any total amounts $\oS_l>0$.
\end{proposition}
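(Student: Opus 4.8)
The plan is to work one connected component of $G_{\mmS_\alpha}$ at a time and to reduce the statement to Proposition~\ref{snonzero} together with the observation, made just before the proposition, that $\oS_l>0$ forces the root of that component to be nonzero at any steady state. By Lemma~\ref{block}, after reordering the substrates, $B=\diag(B_1,\dots,B_{n_\alpha})$ with $B_l$ the labeled Laplacian of $G_{\mmS_{\alpha,l}}$; so I would fix $l$, write $\oS_l$ for the total amount attached to $G_{\mmS_{\alpha,l}}$, and, using the elimination \eqref{Srational} with a root $S_1\in\mmS_{\alpha,l}$, record that $S_j=r^S_j(\mmS_\alpha^c)\,S_1=\bigl(\sigma_j(\mmS_\alpha^c)/\sigma(\mmS_\alpha^c)\bigr)S_1$ for every $S_j\in\mmS_{\alpha,l}$. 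Since $S_1\neq 0$ at any steady state with $\oS_l>0$ (the quoted observation, applied to this component), it follows that at such a steady state $S_j\neq 0$ precisely when $\sigma_j$ does not vanish there.

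For the implication ``not strongly connected $\Rightarrow$ the right-hand side fails'' I would argue by contraposition. If $G_{\mmS_{\alpha,l}}$ is not strongly connected, Proposition~\ref{snonzero} provides some $S_j\in\mmS_{\alpha,l}$ for which $\sigma_j$ (equivalently $r^S_j$) is the zero rational function, i.e.\ $G_{\mmS_{\alpha,l}}$ has no spanning tree rooted at $S_j$. Then $S_j=r^S_j(\mmS_\alpha^c)S_1=0$ at \emph{every} steady state, and choosing any $\oS_l>0$ --- for which a non-negative steady state exists, PTM systems being conservative --- produces a steady state with $S_j=0$, so the right-hand side of the equivalence indeed fails.

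For the converse, assume $G_{\mmS_{\alpha,l}}$ is strongly connected. Then every $S_j\in\mmS_{\alpha,l}$ is the root of some spanning tree of $G_{\mmS_{\alpha,l}}$, so the elimination \eqref{Srational} and the conservation-law computation preceding the proposition may be carried out with $S_j$ in place of $S_1$; this writes $S_j=\overline{r}^S_j(\mmS_\alpha^c)$ with $\overline{r}^S_j$ an S-positive rational function with coefficients in $\R(\Con\cup\{\oS_l\})$, and the quoted observation (now with root $S_j$) gives $S_j\neq 0$ --- in fact $S_j>0$ for non-negative core variables --- whenever $\oS_l>0$. The steady state being fixed and only the choice of root changing, this holds simultaneously for all $S_j\in\mmS_{\alpha,l}$, which is the assertion. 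The step I expect to be delicate is exactly this last one: Proposition~\ref{snonzero} only asserts that $\sigma_j$ is a nonzero \emph{rational function}, while an S-positive function may still vanish on the boundary of the non-negative orthant, so one cannot simply substitute the steady-state values of $\mmS_\alpha^c$ into $\sigma_j/\sigma$; re-rooting the elimination at $S_j$ and using the S-positivity packaged in the observation before the proposition is what removes this difficulty. A secondary, routine point is the existence (used above) of a non-negative steady state in each compatibility class with $\oS_l>0$.
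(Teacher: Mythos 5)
Your argument is correct and follows the paper's own route: the proposition is derived there directly from the remark that $\oS_l>0$ forces the chosen root to be nonzero at steady state together with Proposition~\ref{snonzero}, which are exactly the two ingredients you combine, block by block via Lemma~\ref{block} and with re-rooting at each $S_j$ in the strongly connected case. The additional care you take with S-positive functions possibly vanishing on the boundary of the non-negative orthant, and with the existence of a steady state in the ``not strongly connected'' direction, goes beyond what the paper records but does not change the approach.
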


 By substitution of $S_1$ by $\overline{r}^S_1$, we obtain 
\begin{equation}\label{substcons} Y_k= \overline{r}^Y_k(\mmS_{\alpha}^c):= r^Y_k(\mmS_{\alpha}^c)\overline{r}^S_1(\mmS_{\alpha}^c),\qquad S_j= \overline{r}^S_j(\mmS_{\alpha}^c):= r^S_j(\mmS_{\alpha}^c)\overline{r}^S_1(\mmS_{\alpha}^c)\end{equation}
 with $\overline{r}_k^Y,\overline{r}_j^S$  either zero or S-positive rational functions in $\mmS_{\alpha}^c$ with coefficients in $\R(\Con\cup \{\oS_1\})$.

\begin{proposition}\label{resultS1}
Assume that for each $l=1,\dots,n_{\alpha}$, there exists a spanning tree of $G_{\mmS_{\alpha,l},\mmY_{{\alpha},l}}$  rooted at some species $S_{i_l}$. Then,  equations \eqref{dS2} are satisfied if and only if
$$ S_{j}=r_{j}^S(\mmS_{\alpha}^c) S_{i_l},\qquad S_j\in \mmS_{\alpha,l},$$
where  $r_{j}^S$ is  zero or an S-positive rational function in $\mmS_{\alpha}^c$ with coefficients in $\R(\Con)$. Further,   the conservation law $\oS_{l}=\sum_{S_i\in \mmS_{{\alpha},l}} S_i +\sum_{Y_k\in  \mmY_{{\alpha},l}}Y_k$  is fulfilled if and only if 
\begin{equation}\label{s1elim}
 S_{i_l}= \overline{r}^{S}_{i_l}(\mmS_{\alpha}^c),
 \end{equation}
where $\overline{r}^{S}_{i_l}$ is an S-positive rational function in $\mmS_{\alpha}^c$ with coefficients in $\R(\Con \cup \{\oS_{l}\})$.
\end{proposition}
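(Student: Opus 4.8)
The plan is to deduce the statement by assembling the material of this section and applying it componentwise. By Lemma~\ref{Slinear} the equations \eqref{dS2} for the substrates in $\mmS_{\alpha}$ form a homogeneous linear system $B\,S_{\mmS_{\alpha}}=0$ with coefficients in $\R(\Con\cup\mmS_{\alpha}^c)$, and by Lemma~\ref{block} the matrix $B$ decomposes, after reordering the substrates, as $\diag(B_1,\dots,B_{n_{\alpha}})$ with $B_l$ the Laplacian of the component $G_{\mmS_{\alpha,l}}$. Using the description of the edges of $G_{\mmS_{\alpha}}$ provided by Lemma~\ref{block}, a directed path in $G_{\mmS_{\alpha,l}}$ corresponds to a chain of reactions via $\mmY_{\alpha}$; hence the hypothesis that $G_{\mmS_{\alpha,l},\mmY_{\alpha,l}}$ has a spanning tree rooted at $S_{i_l}$ translates into the statement that the substrate graph $G_{\mmS_{\alpha,l}}$ has a spanning tree rooted at $S_{i_l}$. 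I would make this correspondence explicit as the first step, so that the Matrix--Tree machinery of the section applies verbatim to each block.

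Second, I would pin down the rank of each block $B_l$. On one hand, summing $\dot S_i=0$ over $S_i\in\mmS_{\alpha,l}$ and using $\dot Y_k=0$ gives \eqref{dS0}, i.e.\ the column sums of $B_l$ vanish, so $\operatorname{rank}B_l\le N_{\alpha,l}-1$. On the other hand, by the Matrix--Tree theorem the principal minor $B_{l,(i_l,i_l)}=\sum_{\tau\in\Theta(S_{i_l})}\pi(\tau)$ is nonzero because a spanning tree rooted at $S_{i_l}$ exists, so $\operatorname{rank}B_l=N_{\alpha,l}-1$ and $\ker B_l$ is one-dimensional. Cramer's rule together with the Matrix--Tree theorem then produces the explicit generator $S_j=r_j^S(\mmS_{\alpha}^c)\,S_{i_l}$ exactly as in \eqref{Srational}, with $r_j^S$ either zero or S-positive over $\R(\Con)$. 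This yields the first equivalence: the forward direction is the derivation just described; for the converse, the displayed parameterization is by construction the Cramer solution, hence lies in $\ker B_l$, and one-dimensionality of the kernel shows there are no other solutions of \eqref{dS2}.

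Third, for the conservation-law part I would substitute the parameterization of all $S_j\in\mmS_{\alpha,l}$ into \eqref{yelim} to obtain $Y_k=r_k^Y(\mmS_{\alpha}^c)\,S_{i_l}$ (zero or S-positive, as in \eqref{ysubst}), insert these expressions into $\oS_l=\sum_{S_i\in\mmS_{\alpha,l}}S_i+\sum_{Y_k\in\mmY_{\alpha,l}}Y_k$, and factor out $S_{i_l}$. The coefficient is $1+\sum_j r_j^S+\sum_k r_k^Y$, which is an S-positive rational function in $\mmS_{\alpha}^c$ over $\R(\Con)$ (S-positive because one of its summands is the constant $1$), hence invertible over $\R(\Con\cup\{\oS_l\})$; dividing gives \eqref{s1elim} with $\overline r_{i_l}^S$ S-positive. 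The stated equivalence follows, since \eqref{s1elim} is obtained from the conservation law by an invertible operation.

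Conceptually there is little obstruction: this proposition is essentially a repackaging of Lemmas~\ref{Slinear} and \ref{block} together with the Matrix--Tree computation behind \eqref{Srational}. The two points deserving genuine care are the translation of the hypothesis from the mixed graph $G_{\mmS_{\alpha,l},\mmY_{\alpha,l}}$ to the substrate graph $G_{\mmS_{\alpha,l}}$ that the Matrix--Tree argument requires, and the routine but essential bookkeeping that S-positivity is preserved under each substitution and under inverting the (manifestly S-positive) coefficient of $S_{i_l}$.
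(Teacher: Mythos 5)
Your proposal is correct and follows essentially the same route as the paper, which proves this proposition implicitly through the preceding discussion: the block decomposition of $B$ from Lemma~\ref{block}, the vanishing column sums \eqref{dS0}, the Matrix--Tree/Cramer computation yielding \eqref{Srational} and its S-positivity, and the factoring of $S_{i_l}$ out of the conservation law with an S-positive (hence invertible) coefficient. Your two flagged points of care — reading the rooted-spanning-tree hypothesis as a statement about the directed graph $G_{\mmS_{\alpha,l}}$, and tracking S-positivity through the substitutions — are exactly the points the paper relies on as well.
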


In Example \eqref{mainex}, the  graph $G_{\mmS_1}$ has two connected components: $S_5$, which does not allow further eliminations, and 
$\xymatrix{
S_1 \ar@<0.3ex>[r]^{\widetilde{b}_{2,1}} & S_2 \ar@<0.3ex>[l]^{\widetilde{b}_{1,2}}}$,
which is strongly connected. Selecting $S_1$ as the non-eliminated species   we obtain
$$S_2= \frac{\widetilde{b}_{2,1}}{\widetilde{b}_{1,2}}S_1 = \frac{ d_{1,2} }{b_{1,4}^2\mu_{2,3}^2S_3}  S_1,\quad Y_1=\frac{ d_{1,2} \mu_{2,3}^1}{b_{1,4}^2\mu_{2,3}^2} S_1 = \frac{ d_{1,2}(b_{1,4}^2+c_{2,1})}{b_{1,4}^2c_{1,2}}S_1, \quad  Y_2=\frac{ d_{1,2} }{b_{1,4}^2} S_1.$$
The total amount equations $\oS_1=S_5+Y_3$ and $\oS_2=S_1+S_2+Y_1+Y_2$ give:
\begin{align*}
\oS_1 & = S_5(1+\mu_{4,5}^3  S_4), &
\oS_2 & =\frac{ d_{1,2} }{b_{1,4}^2}\left ( \frac{1 }{\mu_{2,3}^2S_3}  + \frac{ b_{1,4}^2+c_{2,1}}{c_{1,2}} +1 +\frac{b_{1,4}^2}{d_{1,2}}\right) S_1.  
\end{align*}
Let $\widetilde{r}^S_1(S_3,S_4)=\oS_2 \left ( \frac{1 }{\mu_{2,3}^2S_3}  + \frac{ b_{1,4}^2+c_{2,1}}{c_{1,2}} +1+\frac{b_{1,4}^2}{d_{1,2}} \right)^{-1}$; thus:
\begin{align}
S_1 & = \frac{b_{1,4}^2}{d_{1,2}}\widetilde{r}^{S}_{1}(S_3,S_4), &  S_2&= \frac{\widetilde{r}^S_1(S_3,S_4)}{\mu_{2,3}^2S_3},  & S_5 &= \frac{\oS_1}{1+\mu_{4,5}^3S_4}, \label{mainexY} \\
 Y_1& = \frac{ (b_{1,4}^2+c_{2,1})}{c_{1,2}}\widetilde{r}^S_1(S_3,S_4), &  Y_2&=\widetilde{r}^S_1(S_3,S_4), & Y_3 &= \frac{\mu_{4,5}^3 \oS_1 S_4}{1+\mu_{4,5}^3S_4}. \nonumber
 \end{align}
Thus, all species are given as S-positive rational  functions of $S_3,S_4$ in the coefficient field $\R(\Con\cup \{\oS_1,\oS_2\})$.

\subsection{Steady state equations}
To summarize, at steady state the intermediate complexes $\mmY$ can be expressed as rational functions of the substrates $\mmS$ and therefore eliminated. Further, provided a cut $\mmS_{\alpha}$ exists, the variables $\mmS_{\alpha}$ can be expressed as functions of $\mmS_{\alpha}^c=\mmS\setminus\mmS_{\alpha}$ and therefore also eliminated. For the latter statement, we make use of the conservation laws (with given total amounts) for the species in $\mmS_{\alpha}$ determined by the connected components of $G_{\mmS_{\alpha},\mmY_{\alpha}}$.

Specifically, consider the steady state equations  \eqref{dS2} for  $\mmS_{\alpha}^c$. Substituting the expressions in \eqref{ysubst} and \eqref{Srational} for $\mmY$ and  $\mmS_{\alpha}$ provides  the steady states equations in terms of $\mmS_{\alpha}^c$ and the selected variables $S_{i_l}$ (one for each conencted component of  $G_{\mmS_{\alpha}}$). Using  \eqref{substcons},   the steady states equations are given in terms of $\mmS_{\alpha}^c$ only. Let $\mmS_{\alpha}^c=\{S_{N_{\alpha}+1},\dots,S_N\}$ and  let $\Phi_u(\mmS_{\alpha}^c)=0$ be the equation obtained from $\dot{S}_u=0$ after elimination of $\mmY$ and $\mmS_{\alpha}$ and removal of denominators. The denominators can be chosen to be S-positive and we can multiply the expressions by  the  denominators without changing the \emph{positive} solutions.

Assume that the graph  $G_{\mmS_{\alpha},\mmY_{\alpha}}$ has $n_{\alpha}$ connected components,  and recall that each of them gives rise to only one conservation law (Proposition \ref{maxrank}). Extend the set of conservation laws to a maximal set  of $\dim(\Gamma^{\perp})$ laws.

\begin{theorem}\label{steadystates}
Consider a PTM  system for which there exists a cut $\mmS_{\alpha}$. Further, assume that   each  connected component of $G_{\mmS_{\alpha}}$ admits a rooted spanning tree. If total amounts $\oS_{l}$ are given for the $n_{\alpha}$ connected components of $G_{\mmS_{\alpha},\mmY_{\alpha}}$ and the $\dim(\Gamma^{\perp})-n_{\alpha}$ additional conservation laws, then the non-negative steady states of the system with positive values for all substrates in  $\mmS_{\alpha}^c$ are in one-to-one correspondence with the positive solutions to 
$$\Phi_u(\mmS_{\alpha}^c)=0,\qquad  \oS_{l}=\varphi_l(\mmS_{\alpha}^c)$$ 
for $u=N_{\alpha}+1,\dots,N$ and $l=n_{\alpha}+1,\dots,\dim(\Gamma^{\perp})$.
\end{theorem}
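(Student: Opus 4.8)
The plan is to read off the statement from the chain of eliminations carried out in Sections~\ref{eliminter} and~\ref{partition}, and then to verify that the resulting map between solution sets is a bijection that respects non-negativity. A point of $\R^{N+P}_{\ge 0}$ is a non-negative steady state with the prescribed total amounts precisely when it satisfies \eqref{dY1}, \eqref{dS1} and the $\dim(\Gamma^\perp)$ equations~\eqref{totalamounts}; I would discard these equations in a fixed order, each time recording an equivalent parametrization. By Proposition~\ref{Yelim} the equations $\dot Y_k=0$ for all $k$ are equivalent to writing each $Y_k$ as the zero or S-positive polynomial~\eqref{yelim} in $\mmS$; substitute. By Lemma~\ref{Slinear}, the equations $\dot S_i=0$ for $S_i\in\mmS_\alpha$ then become the homogeneous linear system~\eqref{B-system}, whose matrix $B$ is block diagonal (Lemma~\ref{block}) and, by the rooted-spanning-tree hypothesis, has rank exactly $N_\alpha-n_\alpha$ (Proposition~\ref{maxrank}); since summing these equations over each component $\mmS_{\alpha,l}$ yields $0=0$ once $\dot Y=0$ (cf.~\eqref{dS0}), solving a rank-$(N_\alpha-n_\alpha)$ subsystem is equivalent to imposing all of them, which gives $S_j=r^S_j(\mmS_\alpha^c)\,S_{i_l}$ for $S_j\in\mmS_{\alpha,l}$ as in~\eqref{Srational}, with no loss. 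Plugging this into~\eqref{yelim} yields~\eqref{ysubst}. The $n_\alpha$ conservation laws attached to the connected components of $G_{\mmS_\alpha,\mmY_\alpha}$ now take the form $\oS_l=c_l\,S_{i_l}$ with $c_l$ an S-positive function of $\mmS_\alpha^c$, hence for positive $\oS_l$ are equivalent to~\eqref{s1elim}; substituting this back produces~\eqref{substcons}, which expresses every $Y_k$ and every $S_j\in\mmS_\alpha$ as a zero or S-positive rational function of $\mmS_\alpha^c$. What remains of the original system is exactly $\dot S_u=0$ for $S_u\in\mmS_\alpha^c$ together with the remaining $\dim(\Gamma^\perp)-n_\alpha$ conservation laws; substituting~\eqref{substcons} and clearing the S-positive denominators turns them into $\Phi_u(\mmS_\alpha^c)=0$ and $\oS_l=\varphi_l(\mmS_\alpha^c)$.

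This chain of equivalences is the correspondence. To a non-negative steady state all of whose $\mmS_\alpha^c$-coordinates are positive, I associate those coordinates, which by the above satisfy the reduced system; conversely, to a positive solution $\mmS_\alpha^c$ of the reduced system I associate the point obtained by defining $S_{i_l}$, the remaining $S_j\in\mmS_\alpha$ and the $Y_k$ through~\eqref{s1elim}, \eqref{Srational} and~\eqref{yelim}. I would then check that this reconstructed point is non-negative --- every defining expression is identically zero or S-positive, and an S-positive rational function takes a positive value at positive arguments (here the positive $\mmS_\alpha^c$ and the positive total amounts entering the coefficients) --- and that it is genuinely a steady state with the prescribed total amounts, which holds because each step above was an equivalence valid wherever $\mmS_\alpha^c$ and the relevant total amounts are positive (the cleared denominators being S-positive, hence non-vanishing there). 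The two assignments are mutually inverse, since the reconstruction formulas are precisely the relations that the equations force on a steady state. If $G_{\mmS_\alpha,\mmY_\alpha}$ is disconnected the same argument is applied componentwise, with one root $S_{i_l}$ for each of the $n_\alpha$ components.

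I do not expect any single deep step; the main work is the bookkeeping needed to make this a clean bijection. One must make sure that passing to $\Phi_u=0$ and $\oS_l=\varphi_l$ loses no solutions --- the $n_\alpha$ equations set aside in the rank argument are automatic given $\dot Y=0$, and clearing an S-positive denominator is reversible on the positive orthant of $\mmS_\alpha^c$ --- and gains none, and that the reconstructed values of $\mmY$ and of $\mmS_\alpha$ stay non-negative. This is exactly the point at which one needs S-positivity of all the rational functions produced in Sections~\ref{eliminter}--\ref{partition} (Proposition~\ref{Yelim} and equations~\eqref{Srational}, \eqref{ysubst}, \eqref{substcons}) together with positivity of the total amounts.
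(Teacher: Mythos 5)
Your proposal is correct and follows essentially the same route as the paper: the forward direction is the elimination chain of Sections~\ref{eliminter}--\ref{partition}, and the reverse direction reconstructs $\mmS_{\alpha}$ and $\mmY$ from a positive solution of the reduced system via \eqref{yelim}, \eqref{Srational} and \eqref{s1elim}, using the ``if and only if'' statements of Propositions~\ref{Yelim} and~\ref{resultS1} together with S-positivity (and positivity of the total amounts) to guarantee non-negativity and the harmlessness of clearing denominators. The paper's proof is merely a terser version of the same argument.
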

\begin{proof}
We have shown that any non-negative steady state solution with positive values for all substrates in  $\mmS_{\alpha}^c$  must satisfy these equations.  For the reverse, consider a positive solution $s=(s_{N_{\alpha}+1},\dots,s_N)$  to the equations $\Phi_u(\mmS_{\alpha}^c)=0$ and $\oS_l=\varphi_l(\mmS_{\alpha}^c)$. For $i=1,\dots, N_{\alpha}$, define $s_i$ through equation \eqref{s1elim} and $y_k$,  $k=1,\dots,P$, through equation \eqref{yelim}. For positive rate constants and positive total amounts, $s_i,y_k$ are non-negative (because of the S-positivity of the rational functions defining them). By construction these definitions automatically ensure that the conservation laws with total amounts $\oS_{l}$, $l=1,\ldots,N_{\alpha}$,  are satisfied (see Proposition \ref{resultS1}). 

By Proposition \ref{Yelim}, the values $y_1,\dots,y_P$  satisfy \eqref{dY1} for all $k$ and hence the steady state equations of the intermediate complexes are satisfied. By Proposition \ref{resultS1} the values $s_1,\dots,s_{N_{\alpha}}$ satisfy \eqref{dS2}. Since the latter is just \eqref{dS1} after substitution of \eqref{yelim}, we see that \eqref{dS1} holds as well. Since $\Phi_u(\mmS_{\alpha}^c)=0$ is the steady state equation $\dot{S_u}=0$ after substitution of \eqref{yelim} and \eqref{s1elim}, this equation is also satisfied and the same reasoning applies to the equation $\oS_{l}=\varphi_l(\mmS_{\alpha}^c)$, $l> n_{\alpha}$. Thus, $S_i=s_i$ and $Y_k=y_k$ is a solution to the steady state equations and satisfy the conservation laws corresponding to the total amounts $\oS_{l}$. 
\end{proof}

This theorem together with Proposition \ref{yelim}(iii) and Proposition \ref{snonzero2} gives the following corollary.

\begin{corollary}\label{steadystates2} Assume that $\widehat{G}_{\mmY}$ is strongly connected and that for all $S\in \mmS$ there exists a cut $\mmS_{\alpha}$  such that $S\in \mmS_{\alpha}$ and 
 $G_{\mmS_{\alpha}}$  is strongly connected. Then, $S_i=0$ or $Y_k=0$ is not a steady state solution for any $i,k$.  With the notation of Theorem \ref{steadystates}, the non-negative steady states of the system  are in one-to-one correspondence with the non-negative solutions to 
$$\Phi_u(\mmS_{\alpha}^c)=0,\qquad  \oS_{l}=\varphi_l(\mmS_{\alpha}^c)$$ 
for $u=N_{\alpha}+1,\dots,N$ and $l=n_{\alpha}+1,\dots,\dim(\Gamma^{\perp})$.
\end{corollary}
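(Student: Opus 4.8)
The plan is to push the two positivity criteria of the preceding subsection into Theorem~\ref{steadystates}; the only genuinely new point is that, under the standing hypotheses, the positivity provisos of that theorem become vacuous.

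\emph{Step 1: no species vanishes at a steady state.} Since $\widehat{G}_{\mmY}$ is strongly connected, Lemma~\ref{same} gives that each $\widehat{G}_{\mmY_l}$ is strongly connected, so by Proposition~\ref{Yelim}(iii) every $Y_k$ is, via \eqref{yelim}, a \emph{non-zero} S-positive polynomial in the variables $\mmS$. I would then fix an arbitrary $S\in\mmS$ and, using the hypothesis, pick a cut $\mmS_{\alpha}$ with $S\in\mmS_{\alpha}$ and $G_{\mmS_{\alpha}}$ strongly connected; applying Proposition~\ref{snonzero2} to the connected component of $G_{\mmS_{\alpha}}$ containing $S$ (the corresponding total amount being positive) then gives $S\neq 0$ at every steady state. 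As $S$ is arbitrary, every steady state has all of its substrates strictly positive, and substituting those positive values into the non-zero S-positive polynomials above forces $Y_k>0$ for every $k$. Hence no steady state has a zero coordinate, which is the first assertion.

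\emph{Step 2: the correspondence.} I would apply Theorem~\ref{steadystates} with a cut $\mmS_{\alpha}$ as in its statement; such a cut exists because the hypothesis supplies cuts with $G_{\mmS_{\alpha}}$ strongly connected, and a strongly connected graph has a rooted spanning tree in each of its connected components. Theorem~\ref{steadystates} then yields a bijection between the non-negative steady states whose coordinates in $\mmS_{\alpha}^c$ are all positive and the positive solutions of $\Phi_u(\mmS_{\alpha}^c)=0$, $\oS_l=\varphi_l(\mmS_{\alpha}^c)$. By Step~1 every steady state already has all substrates positive, so that proviso is automatic and the left-hand side is exactly the set of all non-negative steady states. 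To replace ``positive solutions'' by ``non-negative solutions'' (one inclusion being trivial), I would take a non-negative solution $s=(s_{N_{\alpha}+1},\dots,s_N)$, define the eliminated substrates through \eqref{s1elim} and the intermediates through \eqref{yelim}, and invoke the reconstruction from the proof of Theorem~\ref{steadystates}: by the S-positivity of the rational functions in \eqref{substcons} together with the non-negativity of $s$, this produces a non-negative tuple that solves the steady state equations and the prescribed conservation laws. Being a steady state, Step~1 forces all of its coordinates, and in particular $s_{N_{\alpha}+1},\dots,s_N$, to be strictly positive, so $s$ was positive to begin with and the two solution sets coincide.

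\emph{Main obstacle.} The delicate point is exactly this last one: the reconstruction in Step~2 must be \emph{defined} at a merely non-negative $s$, i.e.\ the S-positive rational functions $\overline{r}^{S}_{i_l},\overline{r}^{S}_j,\overline{r}^{Y}_k$ of \eqref{substcons} must have non-vanishing denominators at $s$. Since $\overline{r}^{S}_{i_l}=\oS_l/(1+\sum_j r^{S}_j+\sum_k r^{Y}_k)$ has a denominator with non-zero constant term, the only possible failure is the vanishing at $s$ of the spanning-tree polynomial $\sigma(\mmS_{\alpha}^c)=\sum_{\tau\in\Theta(S_{i_l})}\pi(\tau)$ of \eqref{Srational}. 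I expect ruling this out to be the crux, and would do it either by inspecting the equation $\oS_l=\varphi_l(\mmS_{\alpha}^c)$ directly (a point at which $\sigma$ vanishes is typically not a solution, since $\varphi_l$ carries $\sigma$ in its denominator), or, more cleanly, by arguing straight from Proposition~\ref{snonzero2} applied to the whole family of cuts covering $\mmS$ that the reduced system admits no non-negative solution with a zero coordinate. Everything else is routine bookkeeping on top of the quoted results.
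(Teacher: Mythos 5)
Your proof follows the same route as the paper, which derives the corollary in a single line from Theorem~\ref{steadystates} together with Proposition~\ref{Yelim}(iii) (via Lemma~\ref{same}) and Proposition~\ref{snonzero2}: strong connectivity forces every substrate, and hence every intermediate complex, to be non-zero at any steady state, so the positivity proviso of Theorem~\ref{steadystates} becomes vacuous. The boundary issue you isolate in your final paragraph --- whether a non-negative solution of the reduced system with a vanishing coordinate is even in the domain of the reconstruction, given that $\sigma(\mmS_{\alpha}^c)$ may vanish there --- is a genuine subtlety that the paper passes over in silence (it only notes that clearing denominators preserves \emph{positive} solutions), so your extra care on that point goes beyond, rather than falls short of, the published argument.
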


In Example \eqref{mainex}, $\dim(\Gamma^{\perp})-n_{\alpha}=1$ and only one conservation law is  missing, $\oS_3=S_3+S_4+Y_1+Y_2+Y_3$. 
The elimination procedure leads to the steady state equations consisting of $\dot{S}_3=0$ ($\Phi_3$) and $\oS_3$ ($\varphi_3$):
\begin{align*}
0 = \Phi_3(S_3,S_4) & = -b_{1,4}^2\widetilde{r}^S_1(S_3,S_4)  + \frac{b_{3,5}^3\mu_{4,5}^3 \oS_1S_4}{1+\mu_{4,5}^3S_4} \\ 
\oS_3 = \varphi_3(S_3,S_4) & =S_3+S_4+\frac{ b_{1,4}^2+c_{2,1}+c_{1,2}}{c_{1,2}}\,\widetilde{r}^S_1(S_3,S_4)+\frac{\mu_{4,5}^3 \oS_1S_4}{1+\mu_{4,5}^3S_4}. 
\end{align*}
Since the conditions of Corollary \ref{steadystates2} are fulfilled, any non-negative solution of this reduced system provides a non-negative steady state of the PTM system. The steady states of the other species, $S_1,S_2,S_5,Y_1,Y_2,Y_3$, are found  from \eqref{mainexY}. In this specific example, the first equation is easily transformed into a linear equation in $S_3,S_4$, and hence either $S_3$ or $S_4$ can be eliminated as well, providing a polynomial equation in the remaining variables. In this case, S-positivity is not guaranteed.

In the example we intentionally selected $\mmS_{\alpha}$ to have  the highest possible number of elements, since  all these variables are subsequently  eliminated. In Example \eqref{nopart}, the cut $\mmS_{\alpha}=\{E,S_3,S_4\}$ allows us to eliminate three substrates and reduce the steady state equations to a system of three equations in three variables.

In some systems (see e.g.~Section \ref{SigCasc}) there two different cuts $\mmS_{\alpha}$, $\mmS_{\alpha}'$ might exist, such that the union is not a cut, but still all variables in $\mmS_{\alpha}\cup\mmS_{\alpha}'$ can be eliminated. Thus,  more species might  be eliminated if different cuts are considered.

\section{Examples}\label{examples}

\subsection{TG framework}\label{tgframework}
In \cite{TG-rational}, the authors provide a linear elimination procedure for the special case in which the set of substrates is partitioned into two distinct sets. In their context, 
a \emph{PTM system} (here called \emph{TG  system}) consists of three non-empty and disjoint sets of species called enzymes, substrates, and intermediate complexes:
$$\Enz = \{E_{1},\dots,E_{L}\}, \  \Sub = \{S_{1},\dots,S_{N}\}, \  \Int = \{Y_{1},\dots,Y_{P}\},$$
and a set of reactions $\Rct=R_{a} \cup R_{b} \cup R_{c}$ with
\begin{align*}
R_{a} &= \{ E_{i}+S_{j} \xrightarrow{a_{i,j}^k} Y_{k}| (i,j,k)\in I_{a} \} & R_{c} &=  \{Y_{i} \xrightarrow{c_{i,j}} Y_{j} | (i,j)\in I_{c}\} \\
R_{b} &=  \{ Y_{k} \xrightarrow{b_{i,j}^k} E_{i}+S_{j} | (i,j,k)\in I_{b} \} 
\end{align*}
for $I_{a}, I_{b} \subseteq \{1,\dots,L\}\times \{1,\dots,N\}\times \{1,\dots,P\}$ and $I_{c} \subseteq \{1,\dots,P\}^2$, such that
(i) All chemical species are involved in at least one reaction; (ii) For every intermediate complex $Y_{k}$ there is at most one enzyme $E_{\eta(k)}$, such that $(\eta(k),j,k)\in R_{a}\cup R_{b}$ for some $j$;
(iii) If two intermediate complexes  $Y_{k},Y_{v}$ are 1-linked, then $E_{\eta(k)}=E_{\eta(v)}$.
Further, the graph $\widehat{G}_{\mmY}$ and each connected component of the graph $G_{\Sub}$ are required to be strongly connected. In particular, the assumption that $\widehat{G}_{\mmY}$ is strongly connected implies that any $Y_k$ ultimately reacts to $S_i+S_j$ for some $i,j$. This is our Assumption (ii) of a PTM system.

Essentially, they consider post-translational modification systems in which the enzymes are not allowed to be modified. Let  $\mmS=\Sub\cup \Enz$, $\mmS_{\alpha}=\Sub$, and $\mmS_{\alpha}^c=\Enz$. Properties (i)-(iii) imply that $\mmS_{\alpha}$ is a cut. Note that $\mmY_{\alpha}=\mmY_{\alpha}^c=\mmY$. Thus the framework developed here is an extension of the framework developed in \cite{TG-rational}. 

By assumption (iii) the graph $G_{\mmS_{\alpha}^c,\mmY}$ has $L$ connected components that provide $L$ conservation laws for the enzymes: $ \oE_i = E_i + \sum_{k|\eta(k)=i} Y_k  $, for $i=1,\dots,L$. 
With the notation of Lemma \ref{conslaws},   $N_{\alpha}^c=n_{\alpha}^c=L$, $P_{\alpha}^c=0$, so that $N_{\alpha}^c+P_{\alpha}^c-n_{\alpha}^c=0$ and thus a set of independent conservation laws of a TG system can be derived from the non-interacting graphs of $G_{\mmS,\mmY}$. Further,  the form of $R_a$ and $R_b$ ensures that any non-interacting graph contains species either from $\Enz$ or $\Sub$, but not both. Thus, all conservation laws are associated with a connected component either of $G_{\Enz,\mmY}$ or $G_{\Sub,\mmY}$.

It follows that if all intermediate complexes ultimately dissociate into an enzyme and a substrate, and each connected component of $G_{\mmS_{\alpha}}$ admits a rooted spanning tree, then  elimination of the variables in $\mmS_{\alpha}\cup \mmY$ reduces the steady state equations to $L$ equations derived from the total amount of enzymes.

\subsection{Signaling cascades}
\label{SigCasc}
Our setting is well-suited to study elimination of variables in signaling pathways.  Signaling pathways form a special type of PTM systems and an extension of TG systems to include some substrates that also act as enzymes.

\begin{definition}\label{signaling} A \emph{signaling cascade} is a collection of  TG systems  $R^{1},\dots,R^{n}$, with corresponding sets of species  
$$\Enzn^{i} = \{E^{i}_{1},\dots,E^{i}_{L_{i}}\}, \quad \Subn^{i} = \{S^{i}_{1},\dots,S^{i}_{N_{i}}\}, \quad \mmY^{i} = \{Y^{i}_{1},\dots,Y^{i}_{P_{i}}\}$$
and sets of reactions  $\Rct^i=R_{a}^i \cup R_{b}^i \cup R_{c}^i$, for $i=1,\dots,n$, satisfying the following conditions:
\begin{enumerate}[(i)]
\item $(\Enz^{i}\cup \Subn^{i}\cup\, \mmY^i)    \cap (\Enz^{j}\cup \Subn^{j}\cup\, \mmY^j)=\{E^{i+1}_1\} = \{S_{N_i}^i\}$ if $j=i+1$ and it is empty otherwise.
\item For all $i$, each connected component of the graph $G_{\Subn^i}$ admits a spanning  tree rooted at $S_{N_i}^i$. 
\item All intermediate complexes ultimately dissociate into two substrates.
\end{enumerate}

\end{definition}
Condition (i)  implies that a signaling cascade consists of independent TG systems ``joined'' by only one substrate acting as an enzyme in the layer below. This description fits   signaling pathways in which the signal is transmitted downstream. Condition (ii) ensures that the intermediate complexes can be  eliminated. 

Let $N=N_1+\dots+N_n$, $L=L_1+\dots+L_n$ and $ \mmS=\bigcup_i\Enz^{i}\cup \Subn^{i}$.
For each $i$, consider the subset $\Subn^i\subset\, \mmS$.  The associated set of intermediate complexes is  $\mmY_{\Sub^i}=\mmY^i\cup \{Y_k\in \mmY^{i+1}|\, \eta(k)=S_{N_i}^i \}$, and $\Subn^i$ is closed (TG systems do not incorporate reactions $S_u\rightarrow S_j$ among substrates or enzymes). By definition, substrates in $\Subn^i$ do not  interact and thus $\Subn^i$ is a cut.

For simplicity, we assume that the graph $G_{\Subn^i}$ is connected for each $i$.
By Proposition \ref{resultS1}, elimination of the variables in $\Subn^i$ provides 
the steady state relation
$$ S_j^i = r_j^i(\Enzn^i) S_{N_i}^i, \qquad S_j^{i} \in \Subn^i\setminus \{S_{N_i}^i\}.$$ 
By Lemma \ref{appear}, $r_j^i$ depends on the species in $\Enzn^i$ only: if $S_u^i+S_t$ ultimately reacts to $S_j^i+S_r$ for some species $S_u^i$ in $\Subn^i$  and $S_r\in \mmS\setminus \Subn^i$ via $\mmY$ , then  since $S_j^i\neq S_{N_i}^i$,   $S_t=S_r=E_{\eta}^i$ for some $E_{\eta}^i\in \Enz^i$. Further, if $Y_k\in \mmY_{\Sub^i}$, we let $Y_k=r^Y_k(\Enzn^i) S_{N_i}^i$ be the corresponding rational function. 

\medskip
{\bf Conservation laws.} 
Since $G_{\Subn^i}$ is connected and admits a rooted spanning tree, the sum of the species in the graph $G_{\Subn^i,\mmY_{\Subn^i}}$ provides the only conservation law among the species in $\Subn^i\cup \mmY_{\Subn^i}$. Thus,  for each $i$, let a total amount $\oS_i$ be given. We have at steady state
\begin{equation}\label{sigcascsub}
 \oS_i = \sum_{S_j^i \in \Subn^i}  r^i_j(\Enzn^i) S_{N_i}^i  +\sum_{Y_k\in \mmY_{\Sub^i}} r^Y_k(\Enzn^i) S_{N_i}^i. 
 \end{equation}
For $i=n$, $S_{N_n}^n\notin \Enzn^n$, and so $S_{N_n}^n$ is expressed as a rational function in $\Enzn^n$.

Thus, if we let  $\Enzn=\bigcup_{i} \Enzn^i$, we have  that the species in $\mmS\setminus \Enz$ are given as rational functions in $\Enz$   with coefficients  in $\R(\Con)$.  Condition (iii) implies that for  $E\in \Enzn^i\setminus  \{S_{N_i}^i\}$, $\{E\}$ is a cut with associated (connected) graph $G_{E,\mmY_E}$. Thus, if the total amount $\oE$ is provided, the steady states must fulfill the equality
\begin{equation}\label{sigcascenz} 
\oE= E + \sum_{k| E=E_{\eta(k)}} Y_k = E+ \sum_{k| E=E_{\eta(k)}}r^Y_k(\Enzn^i) S_{N_i}^i.
\end{equation}

We conclude that the non-negative steady states of a signaling cascade are solutions to $L$ equations in  $\Enz$ with coefficients in $\R(\Con)$, provided that total amounts for $\Enz$ are given; that is, $\oS_1,\dots,\oS_{n-1}$ for the enzymes $S_{N_i}^i$, \eqref{sigcascsub} and $\oE^i_{\eta}$ for  $E^i_{\eta}\in\Enzn\setminus  \{S_{N_1}^1,\dots,S_{N_{n-1}}^{n-1}\}$, \eqref{sigcascenz}.

Note that the number of conservation laws obtained in this way is $m=\sum_i L_i+1$ (remember $\oS_n$). Let $\epsilon=1$ if $n$ is even and $0$ otherwise, and  let $\epsilon^c=1-\epsilon$.   The cuts provide all conservation laws: The graph associated to the cut $$\mmS_{\alpha} = \bigcup_{i \textrm{ even}} \Subn^{i} \cup \bigcup_{i \textrm{ odd}} \Enzn^i$$ has $n_{\alpha}=\epsilon+\sum_{i\textrm{ odd}} L_i$ connected components and thus,  $n_{\alpha}^c=\epsilon^c+\sum_{i\textrm{ even}} L_i$. We have $N_{\alpha}=\sum_{i\textrm{ odd}} L_i+\sum_{i\textrm{ even}} (N_i-1) + \epsilon$, and
$N_{\alpha}^c=\sum_{i\textrm{ even}} L_i+\sum_{i\textrm{ odd}} (N_i-1) + \epsilon^c$.  
Further, $\mmY_{\alpha}=\mmY$, so that $P_{\alpha}^c=0$.  

Let $\gamma=N_{\alpha}^c-n_{\alpha}^c=\sum_{i\textrm{ odd}} (N_i-1)$. By Lemma \ref{conslaws},
if there are $\gamma$ independent terms in $\mmS_{\alpha}^c\cap \Gamma$, then all conservation laws come from non-interacting graphs. By hypothesis, for $i$ even, the graph $G_{\Subn^i}$ has a spanning tree rooted at some node $S_j$. This means that for every $S_u\neq S_j$ in $\Subn^i$, there exists a directed path 
$\xymatrix@C=10pt{S_u \ar@{->}[r] & S_{k_1} \ar@{->}[r] & \dots \ar@{->}[r] & S_{k_r} \ar@{->}[r] & S_j }$.
By the conditions of a TG system and Lemma \ref{block}, an edge $S_{k_v}\rightarrow S_{k_s}$ implies that there is either a reaction $S_{k_v}\rightarrow S_{k_s}$, or $E+S_{k_v}$ ultimately reacts to $E+S_{k_s}$ via $\mmY$. In either case, we see that $S_u-S_j\in \mmS_{\alpha}^c\cap \Gamma$ for all  $S_u\neq S_j$ in $\Subn^i$, implying that there are indeed  $\gamma$ independent vectors in $\mmS_{\alpha}^c\cap \Gamma$.

\subsection{Biological examples}

\hspace{.1cm}
\medskip

{\bf MAPK signaling cascade.}
We consider the first two layers of the MAPK cascade: a two-layer cascade with one-site modification in the first layer and two-site modifications in the second layer. In the latter, dephosphorylation is considered sequential but this is not the case for phosphorylation  \cite{Markevich-mapk}.

 The reactions of the system in the first layer are
 $$ \xymatrix@C=15pt{
E+S_{0}^1 \ar@<0.5ex>[r] & Y_{1}^1 \ar[r] \ar@<0.5ex>[l] & E+ S_{1}^1} \qquad \xymatrix@C=15pt{F_{1}+S_{1}^1 \ar@<0.5ex>[r] & Y_{2}^1 \ar[r]  \ar@<0.5ex>[l] & F_{1}+ S_{0}^1}$$ 
accounting for phosphorylation and dephosphorylation, respectively, via a Michaelis-Menten mechanism. 
In the second layer we have the phosphorylation reactions 
 \begin{align*}
 \xymatrix@C=15pt{
S_{1}^1+S_{0,0}^2 \ar@<0.5ex>[r] & Y_{1}^2 \ar[r]  \ar@<0.5ex>[l] & S_{1}^1+ S_{1,0}^2 }\qquad \xymatrix@C=15pt{
S_{1}^1+S_{0,0}^2 \ar@<0.5ex>[r] & Y_{2}^2 \ar[r]  \ar@<0.5ex>[l] & S_{1}^1+ S_{0,1}^2 } \\
\xymatrix@C=15pt{
S_{1}^1+S_{1,0}^2 \ar@<0.5ex>[r] & Y_{3}^2 \ar[r]  \ar@<0.5ex>[l] & S_{1}^1+ S_{1,1}^2 }\qquad \xymatrix@C=15pt{
S_{1}^1+S_{0,1}^2 \ar@<0.5ex>[r] & Y_{4}^2 \ar[r]  \ar@<0.5ex>[l] & S_{1}^1+ S_{1,1}^2 } 
\end{align*}
Dephosphorylation proceeds sequentially in the following way:
$$\xymatrix@C=15pt{
F_2 +S_{1,1}^2 \ar@<0.5ex>[r] & Y_{5}^2 \ar[r]  \ar@<0.5ex>[l] & F_2+ S_{1,0}^2 }\qquad \xymatrix@C=15pt{
F_2+S_{1,0}^2 \ar@<0.5ex>[r] & Y_{6}^2 \ar[r]  \ar@<0.5ex>[l] & F_2+ S_{0,0}^2 } $$ 

The sets of enzymes are $\Enzn^1=\{E,F_1\}$, $\Enzn^2=\{S_1^1,F_2\}$. The sets of substrates are $\Subn^1=\{S_0^1,S_1^1\}$, $\Subn^2=\{S_{0,0}^2,S_{1,0}^2,S_{0,1}^2,S_{1,1}^2\}$. The sets of intermediate complexes are $\Intn^1=\{Y_1^1,Y_2^1\}$, $\Intn^2=\{Y_1^2,Y_2^2,Y_3^2,Y_4^2,Y_5^2,Y_6^2\}$.
We have $\Enzn^2\cap \Subn^1=\{S_1^1\}$, so that the modified substrate in the first layer is a kinase of the next layer. The superindex denotes the layer, while the subindex  denotes  phosphorylation state (the presence of the phosphate group is represented by $1$).

The components of the graph $\widehat{G}_{\mmY}$ are each of the intermediate complexes and are thus strongly connected. The graphs $\widehat{G}_{\Subn^1}$ and $\widehat{G}_{\Subn^2}$ are 
$$\xymatrix{S_0^1 \ar@<0.4ex>[r] & S_1^1  \ar@<0.4ex>[l]}\qquad\quad \xymatrix@R=1pt{   & S_{1,0}^2 \ar@<0.5ex>[dr]  \ar@<0.5ex>[dl] \\  S_{0,0}^2 \ar[dr] \ar@<0.5ex>[ur] & & S_{1,1}^2  \ar@<0.5ex>[ul] \\  & S_{0,1}^2  \ar[ur]}$$
which  are also strongly connected. 
The conservations laws (all derived from non-interacting graphs) are 
\begin{align*}
\oE &= E+Y_1^1 & \oS_1 &= S_0^1+S_1^1 + Y_1^1 + Y_2^1 + Y_1^2+Y_2^2+Y_3^2+Y_4^2 \\
\oF_1 &= F_1+Y_2^1  & \oS_2 &= S_{0,0}^2 + S_{1,0}^2 + S_{0,1}^2 + S_{1,1}^2 +  Y_1^2+Y_2^2+Y_3^2+Y_4^2 + Y_5^2+Y_5^2\\
\oF_2 &= F_2 + Y_5^2 + Y_6^2
\end{align*}
Therefore, if total amounts are provided, then the steady states of the two-layer cascade are found as solutions to a system of four polynomial equations in four variables, namely $E,F_1,F_2,S_1^1$.

\medskip
{\bf Receptor protein-tyrosine kinase.}
Receptor protein-tyrosine kinases (RPTK) 
are cell surface receptors  linked to enzymes that phosphorylate their substrate proteins in tyrosine residues. The common mechanism for their activation is autophosphorylation following ligand-induced dimerization \cite[$\S$15]{cell}. The phosphorylated receptor serves as binding site to downstream signaling molecules, such as SH2 domain containing proteins. Further, the receptor can be dephosphorylated by several protein tyrosine phosphatases (PTP) \cite{ostman}. 

A simple model describing the phosphorylation state of an RPTK  is:
$$ 
\xymatrix@C=15pt{  2R_0 \ar@<0.5ex>[r] & Y_1 \ar[r] \ar@<0.5ex>[l] & 2R_1 & S + R_1 \ar@<0.5ex>[r] & Y_2 \ar@<0.5ex>[l]  &   F + R_1 \ar@<0.5ex>[r] & Y_3 \ar[r] \ar@<0.5ex>[l] &  F+R_0   }$$
where $R_0,R_1$ stands for the unphosphorylated and phosphorylated RPTK respectively, $S$ is a protein binding $R_1$, and $F$ is a PTP. 

We have  $\mmS=\{R_0,R_1,S,F\}$ and $\mmY=\{Y_1,Y_2,Y_3\}$.  Note that $\mmS_{\self}=\{R_0,R_1\}$ are the self-interacting substrates and thus cannot  be part of a cut.
First of all, the intermediate complexes $Y_k$ can be eliminated in terms of $\mmS$. 
The graph $G_{\mmS,\mmY}$ is
\begin{center}
\begin{tikzpicture}
\node (R0) at (1,0) {$R_0$};
\node (F) at (2,0.5) {$F$};
\node (Y1) at (2,-0.5) {$Y_1$};
\node (Y3)at (3,0.5) {$Y_3$};
\node (R1)at (3,-0.5) {$R_1$};
\node (Y2)at (4,-0.5) {$Y_2$};
\node (S)at (5,-0.5) {$S$};

\path[draw] (R0) to (Y1);
\path[draw] (R0) to (F);
\path[draw] (R1) to (Y1);
\path[draw] (Y3) to (F);
\path[draw] (S) to (Y2);
\path[draw] (R1) to (Y2);
\path[draw] (R1) to (Y3);

\end{tikzpicture}
\end{center}
The non-interacting graphs provide two conservation laws: $\oF = F + Y_3$, and $\oS = S+Y_2$, associated to the cut  $\mmS_{\alpha}=\{F,S\}$. Thus, the substrates $F,S$ can be eliminated, in fact from the conservation laws.
We conclude that at steady state all species are described   as rational functions of $R_0,R_1$ and the non-negative steady states are in one-to-one correspondence with the non-negative solutions to the equations corresponding to $\dot{R}_0$ and the remaining conservation law $\overline{R}=R_0+R_1+2Y_1+Y_2+Y_3$.

\section*{Acknowledgments}
EF is supported by a postdoctoral grant from the ``Ministerio de Educaci\'on'' of Spain and the project  MTM2009-14163-C02-01 from the ``Ministerio de Ciencia e Innovaci\'on''.  CW is supported by the Lundbeck Foundation, Denmark and the Leverhulme Trust, UK. 
 
%\bibliographystyle{plain}
%\bibliography{framework}

\end{document}